\title{Inclusion Logic and Fixed Point Logic}
\author{Pietro Galliani \\ {\normalsize University of Helsinki} \\ {\normalsize pgallian@gmail.com} \and Lauri Hella \\ {\normalsize University of Tampere} \\ {\normalsize lauri.hella@uta.fi}}
\newtheorem{theorem}{Theorem}
\newtheorem{lemma}[theorem]{Lemma}
\newtheorem{corollary}[theorem]{Corollary}
\newtheorem{definition}[theorem]{Definition}
\newtheorem{proposition}[theorem]{Proposition}
\newtheorem{example}[theorem]{Example}
\newcommand{\dom}{\mathtt{Dom}}
\newcommand{\rank}{\mathtt{rk}}
\newcommand{\parts}{\mathcal P}
\newcommand{\tuple}{\vec}
\newcommand{\ar}{\mathtt{ar}}
\newcommand{\indep}[2]{{#1}\bot{#2}}
\newcommand{\indepr}[3]{{#2}\bot_{#1}{#3}}
\newcommand{\gfp}{\mathsf{gfp}}
\newcommand{\lfp}{\mathsf{lfp}}
\newcommand{\INCL}{\mathrm{FO}(\subseteq)}
\newcommand{\DL}{\mathrm{FO(D)}}
\newcommand{\EXCL}{\mathrm{FO}(\,|\,)}
\newcommand{\INDL}{\mathrm{FO}(\bot)}
\newcommand{\CINDL}{\mathrm{FO}(\bot_c)}
\newcommand{\IEL}{\mathrm{FO}(\subseteq, \,|\,)}
\newcommand{\GFP}{\mathrm{GFP}}
\newcommand{\LFP}{\mathrm{LFP}}
\newcommand{\PGFP}{\mathrm{GFP^+}}
\newcommand{\PTIME}{\mathsf{PTIME}}
\newcommand{\NP}{\mathsf{NP}}
\newcommand{\CFMX}{\mathcal{C}(X,\dom(M))}
\newcommand{\CFAX}{\mathcal{C}(X,A)}
\newcommand{\CFBY}{\mathcal{C}(Y,B)}
\newcommand{\CFMXn}{\mathcal{C}(X,\dom(M)^n)}
\begin{document}
\maketitle

\begin{abstract}
We investigate the properties of Inclusion Logic, that is, First Order Logic with Team Semantics extended with inclusion dependencies.  We prove that Inclusion Logic is equivalent to Greatest Fixed Point Logic, 
and we prove that all union-closed first-order definable properties of relations are definable in it. 
We also provide an Ehrenfeucht-Fra\"iss\'e game for Inclusion Logic, and give an example illustrating
its use.
\end{abstract}
\section{Introduction}


Inclusion Logic \cite{galliani12}, $\INCL$, is a novel logical formalism designed 
for expressing inclusion dependencies between variables. It is closely related to Dependence Logic
\cite{vaananen07}, $\DL$, which is the extension of First Order Logic by functional dependencies between
variables. Dependence Logic initially arose as a variant of \emph{Branching Quantifier Logic} \cite{henkin61} and of \emph{Independence-Friendly Logic} \cite{hintikkasandu89,mann11}, and its study has sparked the development of a whole family of logics obtained by adding various dependency conditions into First Order Logic.

All these logics are based on Team Semantics \cite{hodges97,vaananen07} which is a 
generalization of Tarski Semantics. In Team Semantics, formulas are satisfied or not satisfied by \emph{sets} of assignments, called \emph{teams}, rather than by single assignments. This semantics was introduced in \cite{hodges97} for the purpose of defining a compositional equivalent for the Game Theoretic Semantics of Independence-Friendly Logic \cite{hintikkasandu89,mann11}, but it was soon found out to be of independent interest. See \cite{galliani12c} for a, mostly up-to-date, account of the research on Team Semantics.

Like Branching Quantifier Logic and Independence-Friendly Logic, Dependence Logic has
the same expressive power as Existential Second Order Logic $\Sigma^1_1$: every 
$\DL$-sentence  is equivalent to some $\Sigma^1_1$-sentence, and vice versa 
\cite{vaananen07}. The semantics of Dependence Logic is downwards closed in the sense that
if a team $X$ satisfies a formula $\phi$ in a model $M$, then all subteams $Y\subseteq X$
also satisfy $\phi$ in $M$. The equivalence between $\DL$ and $\Sigma^1_1$
was extended to formulas in \cite{kontinenv09}, where it was proved that $\DL$
captures exactly the downwards closed $\Sigma^1_1$-definable properties of teams.

Other variants of Dependence Logic that have been studied are Conditional Independence Logic 
$\CINDL$ \cite{gradel10}, Independence Logic $\INDL$ \cite{gradel10,vaananen13}, Exclusion Logic 
$\EXCL$ \cite{galliani12} and  Inclusion/Exclusion Logic $\IEL$ \cite{galliani12}. All the logics 
in this family arise from dependency notions that have been studied in Database Theory. In particular, 
$\DL$ is based on \emph{functional dependencies} introduced by Armstrong \cite{armstrong74}, 
$\INCL$ is based on \emph{inclusion dependencies} \cite{fagin81,casanova82}, 
$\EXCL$ is based on \emph{exclusion dependencies} \cite{casanova83}, and 
$\INDL$ is based on \emph{independence conditions} \cite{geiger91}.

The expressive power of all these logics, with the exception of $\INCL$, is well understood. It is known
that, with respect to sentences, they are all equivalent with $\Sigma^1_1$. With respect to formulas, 
$\EXCL$ is equivalent with $\DL$ \cite{galliani12}; and $\IEL$, $\CINDL$ and $\INDL$ are all 
equivalent to each other \cite{galliani12,vaananen13}. Moreover, $\CINDL$ (and hence also
$\IEL$ and $\INDL$) captures all $\Sigma^1_1$-definable properties of teams \cite{galliani12}.

On the other hand, relatively little is known about the expressive power of Inclusion Logic, and the main purpose of the present work is precisely to remedy this. What little is known about this formalism can be found in \cite{galliani12}, and amounts to the following: With respect to formulas, $\INCL$ is strictly weaker
than $\Sigma^1_1\equiv\CINDL$ and incomparable with $\DL\equiv\EXCL$. This is simply because
the semantics of $\INCL$ is not downwards closed, but is closed under unions: if both teams $X$ and
$Y$ satisfy a formula $\phi$ in a model $M$, then $X\cup Y$ also satisfies $\phi$ in $M$.
Moreover, it is known that $\INCL$ is stronger than First Order Logic over sentences, and that it is contained in $\Sigma^1_1$; but it was an open problem whether it it is equivalent to $\Sigma^1_1$,
or whether $\INCL$-formulas could define all union closed
$\Sigma^1_1$-definable properties of teams.

In this paper we show that the answer to both of these problems is negative. In fact, we give
a complete characterization for the expressive power of $\INCL$ in terms of 
Positive Greatest Fixed Point Logic $\PGFP$: We prove that every $\INCL$-sentence is
equivalent to some $\PGFP$-sentence, and vice versa (Corollary \ref{incgfpeq}).
Moreover, we prove that a property of teams is definable by an $\INCL$-formula
if and only if it is expressible by a $\PGFP$-formula in a specific way (Theorems \ref{inc-gfp}
and \ref{gfp-inc}). 

Fixed point logics have a central role in the area of Descriptive Complexity Theory. By the 
famous result of Immerman \cite{immerman86} and Vardi \cite{vardi82}, Least Fixed Point
Logic $\LFP$ captures $\PTIME$ on the class of ordered finite models. Furthermore, it is well known
that on finite models, $\LFP$ is equivalent to $\PGFP$. Thus, we obtain a novel characterization
for $\PTIME$: a class of ordered finite models is in $\PTIME$ if and only if it is definable
by a sentence of $\INCL$. 

In addition to the equivalence with $\PGFP$, we prove that all union-closed first-order
definable properties of teams are definable in Inclusion Logic (Corollary \ref{FOunion}). 
Thus, it is not possible to increase the expressive power of $\INCL$ by adding first-order
definable union-closed dependencies. On the other hand, it is an interesting open problem,
whether $\INCL$ can be extended by some natural set $\mathbf{D}$ of union-closed dependencies
such that the extension $\mathrm{FO}(\subseteq,\mathbf{D})$ captures all 
union-closed $\Sigma^1_1$-definable properties of teams.

We also introduce a new Ehrenfeucht-Fra\"iss\'e game that characterizes the expressive 
power of Inclusion Logic (Theorem \ref{efgthm}).
Our game is a modification of the EF game for Dependence Logic defined in \cite{vaananen07}.
Although the EF game has a clear second order flavour, it is still more manageable than 
the usual EF game for $\Sigma^1_1$; we illustrate this by describing a concrete 
winning strategy for Duplicator in the case of models with empty signature
(Proposition \ref{efex}). 
Due to the equivalence between $\INCL$ and $\PGFP$ we see that the EF game for 
Inclusion Logic is also a novel EF game for $\PGFP$; it is quite different 
in structure from the one introduced in \cite{bosse93}. It may be hoped that this new game 
and its variants could be of some use for studying the expressive power of fixed point logics.

\section{Preliminaries}

\subsection{Team Semantics}

In this section, we will recall the definition of the Team Semantics for First Order Logic. For simplicity reasons, we will assume that all our expressions are in negation normal form.

\begin{definition}
Let $M$ be a first order model and let $V$ be a set of variables. A \emph{team} $X$ over $M$ with 
\emph{domain} $\dom(X) = V$ is a set of assignments $s: V \rightarrow \dom(M)$. Given a tuple 
$\tuple t=(t_1,\ldots,t_n)$ of terms with variables in $V$ and an assignment $s\in X$, we write 
$\tuple t \langle s \rangle$ for the tuple $(t_1\langle s \rangle,\ldots,t_n\langle s \rangle)$, where 
$t\langle s \rangle$ denotes the value of the term $t$ with respect to $s$ in the model $M$.
Furthermore, we write
$X(\tuple t)$ for the relation $\{\tuple t \langle s \rangle : s \in X\}$.

A (non-deterministic) \emph{choice function} for a team $X$ over a set $A$ is a function $H: X\to
\parts(A)\setminus\{\emptyset\}$. The set of all choice functions for $X$ over $A$
is denoted by $\CFAX$.
\end{definition}

\begin{definition}[Team Semantics for First Order Logic\footnote{What we present here is the so-called \emph{lax} version of Team Semantics. There also exists a \emph{strict} version, with somewhat different rules for disjunction and existential quantification. As discussed in \cite{galliani12}, the lax semantics has more convenient properties for the case of 
Inclusion Logic.}]
Let $M$ be a first order model and let $X$ be a team over it. Then, for all first-order literals $\alpha$, variables $v$, and formulas $\phi$ and $\psi$ over the signature of $M$ and with free variables in $\dom(X)$, 
\begin{description}
\item[TS-lit:] $M \models_X \alpha$ iff for all $s \in X$, $M \models_s \alpha$ in the usual Tarski Semantics sense; 
\item[TS-$\vee$:] $M \models_X \phi \vee \psi$ iff $X = Y \cup Z$ for some $Y$ and $Z$ such that $M \models_Y \phi$ and $M \models_Z \psi$; 
\item[TS-$\wedge$:] $M \models_X \phi \wedge \psi$ iff $M \models_X \phi$ and $M \models_X \psi$; 
\item[TS-$\exists$:] $M \models_X \exists v \phi$ iff there exists a function $H\in\CFMX$ such that $M \models_{X[H/v]} \psi$, where $X[H/v] = \{s[m/v] : s \in X, m \in H(s)\}$;
\item[TS-$\forall$:] $M \models_X \forall v \phi$ iff $M \models_{X[M/v]} \phi$, where $X[M/v] = \{s[m/v] : s \in X, m \in \dom(M)\}$. 
\end{description}
\end{definition}

The next theorem can be proved by structural induction on $\phi$:
\begin{theorem}[Team Semantics and Tarski Semantics]
\label{thm:TS-FOL}
For all first order formulas $\phi(\tuple v)$, all models $M$ and all teams $X$, $M \models_X \phi$ if and only if for all $s \in X$, $M \models_s \phi$ with respect to Tarski Semantics.
\end{theorem}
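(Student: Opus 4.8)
The plan is to proceed by induction on the structure of $\phi$, matching each clause of the Team Semantics definition against the corresponding clause of Tarski Semantics. The base case concerns literals $\alpha$, and here there is essentially nothing to do: the rule TS-lit \emph{defines} $M \models_X \alpha$ to hold exactly when $M \models_s \alpha$ for every $s \in X$, which is already the desired biconditional.

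For the inductive step I would assume the equivalence for all proper subformulas and treat the connectives and quantifiers in turn. The conjunction and universal cases are immediate. By TS-$\wedge$, $M \models_X \phi \wedge \psi$ iff $M \models_X \phi$ and $M \models_X \psi$, which by the induction hypothesis holds iff every $s \in X$ satisfies both $\phi$ and $\psi$, i.e.\ iff every $s \in X$ satisfies $\phi \wedge \psi$ in the Tarski sense. Similarly, by TS-$\forall$ we have $M \models_X \forall v \phi$ iff $M \models_{X[M/v]} \phi$, and the induction hypothesis converts this into the statement that $M \models_{s[m/v]} \phi$ for every $s \in X$ and every $m \in \dom(M)$, which is precisely the Tarski condition for $\forall v \phi$ at each $s \in X$.

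The two cases carrying the real content are disjunction and existential quantification, since in each direction they require exhibiting a witness (a decomposition $X = Y \cup Z$, or a choice function $H$). For disjunction, the left-to-right direction is routine: a witnessing split $X = Y \cup Z$ with $M \models_Y \phi$ and $M \models_Z \psi$ forces every $s \in X$ into $Y$ or $Z$, hence to satisfy $\phi$ or $\psi$ by the induction hypothesis. The point requiring care is the converse, where I would take the \emph{canonical} decomposition $Y = \{s \in X : M \models_s \phi\}$ and $Z = \{s \in X : M \models_s \psi\}$; here it is essential that we work with the \emph{lax} semantics, so that $Y$ and $Z$ are permitted to overlap and need not partition $X$. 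Since each $s \in X$ satisfies $\phi$ or $\psi$ in the Tarski sense, we indeed have $X = Y \cup Z$, and the induction hypothesis yields $M \models_Y \phi$ and $M \models_Z \psi$, whence $M \models_X \phi \vee \psi$.

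The existential case is analogous, and is where I expect the main (though still modest) obstacle to lie, namely in choosing the witnessing function correctly. From right to left I would define the non-deterministic choice function $H(s) = \{m \in \dom(M) : M \models_{s[m/v]} \phi\}$; this set is nonempty for every $s \in X$ precisely because each such $s$ satisfies $\exists v \phi$ in the Tarski sense, so that $H \in \CFMX$. Every assignment in $X[H/v]$ then satisfies $\phi$ in the Tarski sense by construction, so the induction hypothesis gives $M \models_{X[H/v]} \phi$ and hence $M \models_X \exists v \phi$. Conversely, any witnessing $H$ supplies, for each $s \in X$, some $m \in H(s)$ with $s[m/v] \in X[H/v]$; applying the induction hypothesis to $X[H/v]$ gives $M \models_{s[m/v]} \phi$, which is exactly the Tarski satisfaction of $\exists v \phi$ at $s$. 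This completes the induction.
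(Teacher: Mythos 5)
Your proof is correct and follows exactly the route the paper indicates, namely structural induction on $\phi$ matching each team-semantics clause against the Tarski clause (the paper itself only remarks that the theorem ``can be proved by structural induction'' and omits the details you supply). One minor quibble: your claim that the lax semantics is \emph{essential} in the disjunction case is true only for your particular canonical decomposition --- under strict semantics the theorem still holds, with the disjoint split $Y = \{s \in X : M \models_s \phi\}$, $Z = X \setminus Y$ --- but this does not affect the correctness of your argument for the paper's (lax) semantics.
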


Thus, in the case of First Order Logic it is possible to reduce Team Semantics to Tarski Semantics.
What is then the point of working with the technically more complicated Team Semantics? As we will see in the next subsection, the answer is that Team Semantics allows us to extend First Order Logic in novel and interesting ways. 

Note that on every model $M$, there are two teams with empty domain: the empty team $\emptyset$,
and the team $\{\emptyset\}$ containing the empty assignment $\emptyset$. All the logics 
that we consider in this paper have the \emph{empty team property}: 
$M\models_\emptyset\phi$ for every formula
$\phi$ and model $M$. Thus, we say that a \emph{sentence $\phi$ is true} in a model $M$ if 
$M\models_{\{\emptyset\}}\phi$. If this is the case, we drop the subscript $\{\emptyset\}$, and 
write just $M\models\phi$.

\subsection{Dependencies in Team Semantics}

As we saw, in Team Semantics formulas are satisfied or not satisfied by sets of assignments, called \emph{teams}; and a team corresponds in a natural way to a relation over the domain of the model. Therefore, any property of relations can be made to correspond to some property of teams, which we can then add to our language as a new atomic formula. In particular, we can do so for database-theoretic dependency notions, thus obtaining the following \emph{generalized atoms}:\footnote{The notion of ``generalized atom'' is defined formally in \cite{kuusisto12}. 
}

\begin{definition}[Dependence Atoms]
Let $\tuple t_1$, $\tuple t_2$, $\tuple t_3$ be tuples of terms over some vocabulary. Then, for all models $M$ and all teams $X$ over $M$ whose domain contains the variables of $\tuple t_1 \tuple t_2 \tuple t_3$, 
\begin{description}
\item[TS-fdep:] $M \models_X\; =\!\!(\tuple t_1, \tuple t_2)$ if and only if, for all $s, s' \in X$, $\tuple t_1\langle s \rangle = \tuple t_1\langle s'\rangle \Rightarrow \tuple t_2 \langle s\rangle = \tuple t_2\langle s'\rangle$; 
\item[TS-exc:] For $|\tuple t_1| = |\tuple t_2|$, $M \models_X \tuple t_1 \mid \tuple t_2$ if and only if $X(\tuple t_1) \cap X(\tuple t_2) = \emptyset$;
\item[TS-inc:] For $|\tuple t_1| = |\tuple t_2|$, $M \models_X \tuple t_1 \subseteq \tuple t_2$ if and only if $X(\tuple t_1) \subseteq X(\tuple t_2)$;
\item[TS-ind:] $M \models_X \indep{\tuple t_1}{\tuple t_2}$ if and only if for all $s, s' \in X$ there exists a $s'' \in X$ with $\tuple t_1\langle s''\rangle = \tuple t_1 \langle s \rangle$ and $\tuple t_2 \langle s''\rangle = \tuple t_2 \langle s'\rangle$;
\item[TS-cond-ind:]	$M \models_X \indepr{\tuple t_1}{\tuple t_2}{\tuple t_3}$ if and only if for all $s, s' \in X$ with $\tuple t_1\langle s\rangle = \tuple t_1\langle s'\rangle$ there exists a $s'' \in X$ with $(\tuple t_1 \tuple t_2)\langle s''\rangle = (\tuple t_1 \tuple t_2) \langle s\rangle$ and $(\tuple t_1 \tuple t_3)\langle s''\rangle = (\tuple t_1 \tuple t_3)\langle s'\rangle$.
\end{description}
\end{definition}

These atoms correspond respectively to \emph{functional dependencies} \cite{armstrong74}, to \emph{exclusion dependencies} \cite{casanova83}, to \emph{inclusion dependencies} \cite{fagin81,casanova82}, to \emph{independence conditions} \cite{geiger91}, and to \emph{conditional independence conditions}\footnote{As observed in \cite{engstrom11}, conditional independence atoms also correspond to \emph{embedded multivalued dependencies}.}; and by adding them to the language of First Order Logic we can obtain various logics, whose principal known properties we will now briefly recall.

\medskip

\textbf{Dependence Logic} $\DL$ is obtained by adding functional dependence atoms to the language of First Order Logic. It is the oldest and the most studied among the logics that we will discuss in this work, having been introduced in the seminal book \cite{vaananen07} as an alternative approach to the study of \emph{Branching} \cite{henkin61} and \emph{Independence-Friendly} \cite{hintikkasandu89,mann11} Quantification. It is \emph{downwards closed}, in the sense that, for all models $M$, Dependence Logic formulas $\phi$ and teams $X$, if $M \models_X \phi$ then $M \models_Y \phi$ for all subsets $Y$ of $X$. 

On the level of sentences, Dependence Logic has the same expressive power as Existential Second Order Logic $\Sigma_1^1$.
\begin{theorem}[\cite{walkoe70,enderton70,vaananen07}]
Every $\DL$-sentence is equivalent to some $\Sigma_1^1$-sentence, and vice versa.
In particular, $\DL$ captures $\NP$ on finite models.
\end{theorem}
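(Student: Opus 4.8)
The plan is to prove both directions of the equivalence between $\DL$-sentences and $\Sigma^1_1$-sentences separately. The underlying intuition is that in a sentence $\exists \tuple x\, \phi$ of $\Sigma^1_1$, the existentially quantified second-order relations encode precisely the kind of information that a team records: an assignment of values to tuples of variables. Dependence atoms $=\!(\tuple t_1, \tuple t_2)$ let us simulate the functional dependence of witnessing Skolem functions on the variables they depend on, and this is exactly what allows Team Semantics to recover the game-theoretic content of a $\Sigma^1_1$-sentence.

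For the direction from $\DL$ to $\Sigma^1_1$, I would proceed by structural induction on the $\DL$-formula $\phi(\tuple v)$, proving a stronger statement at the level of formulas: for each $\phi$ with free variables $\tuple v = (v_1,\ldots,v_n)$, there is a $\Sigma^1_1$-formula $\Phi(R)$ with a single fresh $n$-ary relation symbol $R$ such that for every model $M$ and every team $X$ with domain $\{v_1,\ldots,v_n\}$, we have $M \models_X \phi$ if and only if $M \models \Phi(X(\tuple v))$, where $X(\tuple v)$ is the relation coded by the team. The literal and connective cases are routine translations of the semantic clauses TS-lit, TS-$\vee$, TS-$\wedge$, TS-$\exists$, TS-$\forall$; the dependence atom clause TS-fdep translates directly into a first-order condition on $R$. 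The existential and universal quantifier clauses introduce the second-order existential quantification, since TS-$\exists$ asks for the existence of a choice function, which manifests as a fresh existentially quantified relation constrained to be the graph of a witnessing assignment. Specializing to a sentence and the team $\{\emptyset\}$ yields the desired $\Sigma^1_1$-sentence.

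For the converse, from $\Sigma^1_1$ to $\DL$, the key step is the Skolem-function encoding. Given a $\Sigma^1_1$-sentence $\exists S_1 \cdots \exists S_k\, \psi$ with $\psi$ first-order, I would first bring $\psi$ into prenex normal form and then Skolemize: each existential first-order quantifier and each second-order relation $S_i$ is replaced by a function whose value is witnessed by fresh quantified variables, and the functional behaviour of these witnesses is enforced using dependence atoms $=\!(\tuple t_1, \tuple t_2)$ stating that the witness value depends only on the appropriate universally quantified variables. Evaluating the resulting formula over the team $\{\emptyset\}$, the branching structure of Team Semantics under TS-$\forall$ and TS-$\exists$ reconstructs exactly the choices of Skolem functions, while the dependence atoms guarantee that these choices are uniform, i.e.\ genuinely functional, rather than depending illicitly on later-quantified variables.

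The main obstacle, and the technically delicate point, is precisely this encoding of second-order existential quantifiers and of the uniformity of Skolem functions. In ordinary Game-Theoretic Semantics one may choose a witness for an existential quantifier depending on the entire history of play, but to capture $\Sigma^1_1$ faithfully one must ensure that the witnessing function for $S_i$ or for an existential variable depends only on the correct arguments. This is exactly what dependence atoms are designed to enforce, but verifying that the flat/lax semantics of disjunction and existential quantification interacts correctly with the downward closure of $\DL$, and that no spurious dependencies are introduced by the splitting of teams in TS-$\vee$, requires care. The final assertion that $\DL$ captures $\NP$ on finite models then follows immediately from Fagin's theorem, since $\Sigma^1_1$ captures $\NP$.
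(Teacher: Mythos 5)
The paper does not prove this theorem itself --- it is cited from Walkoe, Enderton, and V\"a\"an\"anen --- and your proposal reproduces the standard argument from those sources: for $\DL\to\Sigma^1_1$, a structural induction establishing the stronger formula-level claim with a relation symbol coding the team (the same device behind Theorem \ref{konvaa}); for $\Sigma^1_1\to\DL$, Skolemization with dependence atoms enforcing that witnesses depend only on the intended arguments; and Fagin's theorem for the $\NP$ claim. This is correct and essentially the same approach as in the cited literature.
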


The equivalence between $\DL$ and $\Sigma^1_1$ was extended to formulas by Kontinen and 
V\"a\"an\"anen, who  proved the following characterization:
\begin{theorem}[\cite{kontinenv09}]\label{konvaa}
Let $\phi$ be a $\DL$-formula with free variables in $\tuple v$. Then there exists a $\Sigma_1^1$-sentence 
$\Phi(R)$, where $R$ is a $|\tuple v|$-ary relation symbol which occurs only negatively in $\Phi$, such that 
\[
	M \models_X \phi \iff (M,X(\tuple v)) \models \Phi(R)\text{ for all models 
	$M$ and teams $X\not=\emptyset$.}
\]
Conversely, for any such $\Phi(R)$ there exists an $\DL$-formula $\phi$ such that the above holds.
\end{theorem}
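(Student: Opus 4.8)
The plan is to prove the two directions separately, treating the sentence-level equivalence $\DL \equiv \Sigma_1^1$ cited above as a black box and exploiting the downward closure of $\DL$ throughout. The guiding dictionary is that a team $X$ is encoded by the relation $X(\tuple v)$, which we feed into the fresh $|\tuple v|$-ary symbol $R$; downward closure of $\phi$ in $X$ should correspond exactly to antitonicity of $\Phi(R)$ in $R$, and this antitonicity is precisely what the ``$R$ occurs only negatively'' condition delivers.

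For the direction from $\DL$ to $\Sigma_1^1$, the key device is to turn the formula $\phi(\tuple v)$ into a \emph{sentence} over the expanded vocabulary $\sigma \cup \{R\}$ by setting
\[
\phi^\ast \;:=\; \forall \tuple v\,(\neg R\tuple v \vee \phi).
\]
First I would check, using downward closure together with the locality of $\DL$-formulas (the truth of $\phi$ on a nonempty team depends only on the restriction of that team to the free variables), that $M \models_{\{\emptyset\}} \phi^\ast$ holds exactly when $M \models_{X_R} \phi$ for the canonical team $X_R$ over domain $\tuple v$ with $X_R(\tuple v) = R$. The point is that after the quantifier $\forall \tuple v$ the full team splits as $Y \cup Z$ with $Y(\tuple v) \cap R = \emptyset$ and $Z(\tuple v) \supseteq R$, so that $X_R \subseteq Z$; downward closure then lets us reduce to $Z = X_R$, and locality transfers this to an arbitrary $X$ with $X(\tuple v)=R$. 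I would then apply the sentence-level translation of $\phi^\ast$ into $\Sigma_1^1$ and observe that this translation is \emph{polarity preserving} on vocabulary symbols: Skolemisation introduces only the new function quantifiers, while each first-order literal keeps its polarity in the matrix. Since $R$ occurs in $\phi^\ast$ only in the literal $\neg R\tuple v$, it occurs only negatively in the resulting $\Sigma_1^1$-sentence $\Phi(R)$, as required.

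For the converse, I would start from a normal form $\Phi(R) = \exists \tuple f\,\theta$ with $\theta$ first-order and $R$ only negative, and build a $\DL$-formula $\phi(\tuple v)$ simulating $\Phi$ on a team $X$ with $X(\tuple v) = R$. The existential second-order prefix $\exists \tuple f$ and the first-order quantifiers of $\theta$ are handled by the usual $\Sigma_1^1 \to \DL$ simulation, in which the Skolem functions are recovered from dependence atoms and the matrix is evaluated literal by literal. The genuinely new ingredient is the rendering of the negative atoms $\neg R(\tuple t)$: one must express, inside a $\DL$-formula, that the current value $\tuple t$ lies \emph{outside} $X(\tuple v)$. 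This is exactly an exclusion condition $\tuple t \mid \tuple v$, and since $\EXCL \equiv \DL$ such conditions are available; combined with the quantifier block this lets $\phi$ read off the negative $R$-atoms from the team.

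The main obstacle, in both directions, is the interaction between the team-to-relation encoding and the polarity constraint. In the forward direction it is verifying that the chosen sentence translation keeps $R$ negative and that the locality/downward-closure argument is valid on \emph{all} nonempty teams, the passage between the team $\{\emptyset\}$ and the canonical team for $R$ being exactly the reason the empty team must be excluded. In the converse it is confirming that the exclusion-based rendering of $\neg R(\tuple t)$ behaves correctly under the lax existential and universal rules, so that the simulated formula holds on $X$ if and only if $(M, X(\tuple v)) \models \Phi(R)$; making the universal quantifier $\forall v$, which replaces $R$ by the cylinder $R \times \dom(M)$ and would naively create a positive occurrence of $R$, cohere with this rendering is the delicate point.
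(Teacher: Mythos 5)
First, a point of reference: the paper does not prove Theorem \ref{konvaa} at all --- it quotes it from Kontinen and V\"a\"an\"anen \cite{kontinenv09} --- so your proposal can only be compared with that known proof. Your forward direction is essentially sound: the guard sentence $\forall \tuple v\,(\neg R\tuple v \vee \phi)$, together with downward closure and locality, correctly reduces the formula case to the sentence case. Note only that you need slightly more than the cited black box $\DL\equiv\Sigma^1_1$: you must inspect the particular translation and verify that it preserves the polarity of vocabulary predicates. This is true of the standard Skolemization-based translation, as you assert, but it is a property of the translation, not of the equivalence statement itself.

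The converse direction, however, contains a genuine gap: rendering each negative literal $\neg R(\tuple t)$ by the exclusion atom $\tuple t \mid \tuple v$ and evaluating the matrix ``literal by literal'' is unsound. An exclusion atom speaks only about the \emph{current subteam}, and disjunction splits can shrink the projection onto $\tuple v$; on a subteam $Y$ the atom says $Y(\tuple t)\cap Y(\tuple v)=\emptyset$, which is strictly weaker than ``$\tuple t\langle s\rangle\notin R$ for all $s\in Y$''. Concretely, take unary function symbols $f,g$, let $\dom(M)=\{0,1,2\}$ with $f(0)=1$, $g(0)=2$, $f(1)=f(2)=g(1)=g(2)=0$, and let $\Phi(R)=\forall y\,(\neg R(f(y))\vee\neg R(g(y)))$, in which $R$ is only negative. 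For $R=\{1,2\}$ we have $(M,R)\not\models\Phi$ (failure at $y=0$). Yet for the team $X=\{s_1,s_2\}$ with $s_i(v)=i$, so $X(v)=R$, the naive translation $\forall y\,((f(y)\mid v)\vee(g(y)\mid v))$ \emph{is} satisfied: split $X[M/y]$ by the value of $v$, sending all assignments with $v=2$ to the first disjunct and all with $v=1$ to the second; the first subteam then has $f$-values $\{0,1\}$ against $v$-values $\{2\}$, the second has $g$-values $\{0,2\}$ against $v$-values $\{1\}$, and both exclusion atoms hold. The split ``by the value of $v$'' has no Tarskian counterpart and hides part of $R$ from each disjunct. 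This is exactly the loss-of-information phenomenon that the present paper must circumvent in the disjunction case of Theorem \ref{gfp-inc}, where it stores $X(\tuple x)$ in a fresh tuple; there the repair works because $R$ is \emph{positive} and satisfaction is monotone in $R$. In the dependence-logic setting $R$ is negative, so that repair is unavailable, and the actual Kontinen--V\"a\"an\"anen argument instead arranges the formula so that every split is the canonical pointwise one, determined by $R$-free conditions on the quantified variables only, which guarantees that each subteam evaluating a negative $R$-literal retains the full projection $X(\tuple v)=R$. Your proposal does flag a ``delicate point,'' but locates it at the universal quantifier, which in fact preserves the $\tuple v$-projection and is harmless; the real obstruction is disjunction, and handling it requires an idea that your argument is missing.
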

Thus, $\DL$ is the strongest logic that can be obtained by adding $\Sigma_1^1$-definable downwards-closed dependence conditions to First-Order Logic. Indeed, any such condition will be expressible as $\exists S (X(\tuple v) \subseteq S \wedge \Phi(S))$ for some $\Phi$ in $\Sigma_1^1$, and therefore it will be equivalent to some $\DL$-formula. 
\medskip

\textbf{Exclusion Logic} $\EXCL$, on the other hand, is the logic obtained by adding exclusion atoms to First-Order Logic. It was introduced in \cite{galliani12}, where it was shown to be equivalent to Dependence Logic with respect to formulas.

\medskip

\textbf{Conditional Independence Logic} $\CINDL$, which was introduced in \cite{gradel10}, adds conditional independence atoms $\indepr{\tuple t_1}{\tuple t_2}{\tuple t_3}$ to the language of First Order Logic. Like $\DL$, $\CINDL$ is equivalent to $\Sigma_1^1$ with respect to sentences, and also with respect to formulas: 
\begin{theorem}[\cite{gradel10}]
Every $\CINDL$-sentence is equivalent to some $\Sigma_1^1$-sentence, and vice versa.
\end{theorem}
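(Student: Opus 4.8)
The plan is to prove the two inclusions separately: that every $\CINDL$-sentence has an equivalent $\Sigma_1^1$-sentence, and conversely. I would dispatch the converse direction ($\Sigma_1^1 \subseteq \CINDL$ on sentences) first, since it reduces to the equivalence $\DL \equiv \Sigma_1^1$ already stated above. The key observation is that $\CINDL$ subsumes $\DL$: the functional dependence atom is definable from a conditional independence atom, since for all tuples $\tuple t_1, \tuple t_2$ one checks directly from \textbf{TS-fdep} and \textbf{TS-cond-ind} that $=\!\!(\tuple t_1, \tuple t_2)$ and $\indepr{\tuple t_1}{\tuple t_2}{\tuple t_2}$ are satisfied by exactly the same teams (if $\tuple t_1$ agrees on $s, s'$, the witness $s''$ forces $\tuple t_2\langle s\rangle = \tuple t_2\langle s'\rangle$; conversely $s'' = s$ always works). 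Replacing every dependence atom in a $\DL$-formula by the corresponding conditional independence atom thus yields a truth-preserving translation $\DL \to \CINDL$, and composing with $\DL \equiv \Sigma_1^1$ gives $\Sigma_1^1 \subseteq \CINDL$.

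For the forward direction I would prove the stronger, formula-level statement by structural induction, in the style of Theorem \ref{konvaa}: to each $\CINDL$-formula $\phi$ with free variables among $\tuple v$ I associate a $\Sigma_1^1$-formula $\Phi(R)$, with $R$ an $|\tuple v|$-ary relation symbol, so that $M \models_X \phi$ iff $(M, X(\tuple v)) \models \Phi(R)$ for all $M$ and all nonempty $X$. The encoding is faithful because, when $\tuple v$ enumerates $\dom(X)$, the map $X \mapsto X(\tuple v)$ is a bijection between teams and relations. In the base case each conditional independence atom is literally a first-order (hence $\Sigma_1^1$) property of $R$, read off from \textbf{TS-cond-ind}; conjunction becomes conjunction; disjunction becomes $\exists Y \exists Z(R = Y \cup Z \wedge \Phi_\phi(Y) \wedge \Phi_\psi(Z))$ with fresh second-order quantifiers of arity $|\tuple v|$; the lax existential $\exists u\,\phi$ becomes $\exists S(\forall \tuple x (R\tuple x \leftrightarrow \exists y\, S\tuple x y) \wedge \Phi_\phi(S))$, where $S$ has arity $|\tuple v|+1$ and the first conjunct expresses that the projection of $S$ is $R$ and every tuple of $R$ has a nonempty fibre (matching the choice function $H \in \CFMX$); and the universal $\forall u\,\phi$ is handled by taking $S = R \times \dom(M)$. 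Each clause manifestly preserves $\Sigma_1^1$-definability. Specializing to a sentence $\phi$, so that $\tuple v$ is empty, $R$ is nullary and $X = \{\emptyset\}$, yields the required $\Sigma_1^1$-sentence.

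The main obstacle is the forward induction, and specifically the quantifier clauses. Unlike the downward-closed setting of Theorem \ref{konvaa}, here $R$ may occur positively as well as negatively, so the negative-occurrence normal form is unavailable; I must instead verify directly that the lax rules \textbf{TS-$\exists$} and \textbf{TS-$\forall$} translate into the relational manipulations above, taking care that the nonemptiness requirement on $H$ is captured exactly by the surjectivity-of-projection clause. The remaining work, namely closure of $\Sigma_1^1$ under the second-order quantifiers introduced by disjunction and existential quantification, and checking that the bijective encoding respects substitution of the free-variable tuple, is routine bookkeeping.
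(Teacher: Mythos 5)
Your proposal addresses a statement for which the paper itself offers no proof: it is quoted as a known result of Gr\"adel and V\"a\"an\"anen \cite{gradel10}, so there is no in-paper argument to compare against. Taken on its own terms, your argument is essentially the standard one from that literature, and both halves are sound. The converse direction is exactly the route the paper hints at elsewhere: the equivalence of $=\!\!(\tuple t_1,\tuple t_2)$ with $\indepr{\tuple t_1}{\tuple t_2}{\tuple t_2}$ (recorded in a footnote of the paper) embeds $\DL$ into $\CINDL$ atom by atom, and composing with $\DL\equiv\Sigma_1^1$ on sentences finishes it; your verification of the atom equivalence is correct in both directions. The forward direction, a Kontinen--V\"a\"an\"anen-style induction without the polarity restriction on $R$, is likewise the argument used in \cite{gradel10} and its successors; your clauses for the atoms, conjunction, disjunction, and the lax quantifiers are correct, and in particular the projection condition $\forall\tuple x(R\tuple x\leftrightarrow\exists y\,S\tuple x y)$ does capture exactly the teams of the form $X[H/u]$ with $H\in\CFMX$, using the bijection between teams and relations when $\tuple v$ enumerates the team domain.

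Two points deserve to be made explicit rather than left as bookkeeping. First, your induction hypothesis is stated only for nonempty teams, but the disjunction clause splits $X$ into $Y\cup Z$ where $Y$ or $Z$ may be empty, and then the hypothesis says nothing about $\Phi_\phi(Y(\tuple v))$. The standard repair is a side induction showing that $\Phi_\phi(\emptyset)$ holds in every model (mirroring the empty team property of $\CINDL$); with that lemma the equivalence extends to empty teams and the disjunction and quantifier cases close. Alternatively, build the disjunct ``$Y=\emptyset$'' explicitly into the translation of $\vee$. Second, the final specialization to sentences uses a nullary $R$; you should either interpret $X=\{\emptyset\}$ as the true nullary relation and then substitute a tautology for the $R$-atoms to obtain an honest $\Sigma_1^1$-sentence, or pad with a dummy free variable. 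Neither point affects the architecture of the proof, which is correct.
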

\begin{theorem}[\cite{galliani12}]
A class of relations is definable in Conditional Independence Logic if and only if it contains the empty relation and it is $\Sigma_1^1$-definable.
\end{theorem}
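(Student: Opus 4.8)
I would prove the two directions of the biconditional separately, after fixing the framing. A ``class of relations'' here is an isomorphism-closed class $\mathcal K$ of pairs $(M,R)$ with $R$ a $k$-ary relation over $\dom(M)$, and such a class is defined by a $\CINDL$-formula $\phi(\tuple v)$ with $|\tuple v|=k$ when $\mathcal K=\{(M,X(\tuple v)) : M\models_X\phi\}$. Throughout I identify a team $X$ whose domain is exactly $\tuple v=(v_1,\dots,v_k)$ with the relation $X(\tuple v)$, using the locality property of $\CINDL$ (that $M\models_X\phi$ depends only on the restriction of $X$ to the free variables of $\phi$) to guarantee that this class is well defined. Note that the empty-relation clause is forced in both directions by the empty team property: since $M\models_\emptyset\phi$ for every $\CINDL$-formula $\phi$, and the empty team is the only team with $X(\tuple v)=\emptyset$, every definable class contains the empty relation; and conversely any class we are asked to define is assumed to contain it. So the real content of the theorem is the equivalence ``$\CINDL$-definable $\iff$ $\Sigma_1^1$-definable'' for nonempty teams.

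For the direction from $\CINDL$ to $\Sigma_1^1$, I would construct, by induction on $\phi(\tuple v)$, a $\Sigma_1^1$-sentence $\Phi(R)$ over the signature of $M$ expanded by a fresh $k$-ary relation symbol $R$, with $M\models_X\phi \iff (M,X(\tuple v))\models\Phi(R)$ for all $M$ and all $X\neq\emptyset$. The translation follows the semantic clauses: each literal and each dependency atom, including the conditional independence atom $\indepr{\tuple t_1}{\tuple t_2}{\tuple t_3}$, is a first-order property of the team relation and so is first-order expressible relative to $R$; conjunction and TS-$\forall$ become first-order conjunction and relativized universal quantification; the disjunction clause TS-$\vee$ becomes an existential second-order guess of two relations whose union is the current one; and the clause TS-$\exists$ becomes an existential second-order guess of (the graph of) a choice function $H\in\CFMX$. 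Since every clause introduces at most existential second-order quantifiers over a first-order matrix, the result stays in $\Sigma_1^1$. This is the relativization to $R=X(\tuple v)$ of the sentence-level equivalence of $\CINDL$ with $\Sigma_1^1$, and is the analogue for $\CINDL$ of Theorem \ref{konvaa}, but now with no sign restriction on $R$.

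For the converse, I am given a $\Sigma_1^1$-sentence $\Phi(R)$ defining a class containing the empty relation, and must produce $\phi(\tuple v)\in\CINDL$ with $M\models_X\phi \iff (M,X(\tuple v))\models\Phi(R)$ for $X\neq\emptyset$. Working in the equivalent logic $\IEL$, I would first put $\Phi(R)$ into Skolem normal form $\exists\tuple S\,\forall\tuple x\,\theta(\tuple S,R,\tuple x)$ with $\theta$ quantifier-free. The plan is then to simulate each piece by team-semantic means: the leading universal block $\forall\tuple x$ is produced by genuine universal quantification via TS-$\forall$, which spreads the team across all of $\dom(M)^{|\tuple x|}$; each second-order existential witness $S$ is guessed using the nondeterministic split in TS-$\vee$ together with fresh quantified variables, so that membership ``$\tuple w\in S$'' can afterwards be tested by an inclusion atom (available up to $\CINDL$-equivalence since $\IEL\equiv\CINDL$); and the original relation $R=X(\tuple v)$ stays accessible because its tuples are recorded in the values of $\tuple v$, so occurrences of $R$ in $\theta$ are replaced by inclusion atoms referring back to $\tuple v$. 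Assembling these pieces yields a $\CINDL$-formula with the required behaviour, and the empty-team case is handled automatically by the empty team property together with the hypothesis that $\emptyset$ lies in the class.

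The main obstacle is this converse direction, and specifically the faithful simulation of the second-order existential quantifiers while leaving $X(\tuple v)$ undisturbed. The delicate points are (i) ensuring that the guessed witnesses $\tuple S$ really range over arbitrary relations, not merely over subsets constrained by the current team, which is exactly where the nondeterministic disjunction and the lax semantics are essential, and (ii) verifying that the constructed formula's truth value depends only on $X(\tuple v)$, so that it genuinely defines a class of relations and not a finer property of teams. By contrast, the forward direction is a routine structural induction once the first-order rendering of the conditional independence atom relative to $R$ is written out.
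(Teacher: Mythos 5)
The paper you are being measured against does not actually prove this statement -- it quotes it from \cite{galliani12} -- so your proposal must be judged against the known proof there. Your forward direction ($\CINDL$ to $\Sigma_1^1$) is the standard structural induction and is essentially right. The genuine gap is in the converse direction, and it sits exactly where the content of the theorem lies. You propose that, after Skolemizing $\Phi(R)$, ``occurrences of $R$ in $\theta$ are replaced by inclusion atoms referring back to $\tuple v$.'' Inclusion atoms can only simulate \emph{positive} occurrences of $R$: the atom $\tuple t \subseteq \tuple v$ asserts that the values of $\tuple t$ lie among the team's $\tuple v$-values, and no inclusion atom can assert $\tuple t\langle s\rangle \notin X(\tuple v)$. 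A negated literal $\lnot R\tuple t$ must be rendered by an \emph{exclusion} atom $\tuple t \mid \tuple v$ (or by an explicit characteristic-function encoding). This is not a cosmetic omission: Theorems \ref{inc-gfp} and \ref{gfp-inc} of this very paper show that inclusion-atom access to the team relation captures exactly the conditions in which $R$ occurs only positively (the $\PGFP$-expressible ones), which is \emph{strictly} weaker than $\Sigma_1^1$. So a translation in which every $R$-literal becomes an inclusion atom cannot possibly prove the theorem; you announce that you work in $\IEL$, but the exclusion atoms -- which are precisely what carries the negative occurrences of $R$ and of the guessed witnesses in Galliani's proof -- are never actually used in your construction.

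Even granting both kinds of atoms, the step your sketch flags as ``delicate'' but does not carry out is the real crux: team-level atoms are evaluated in whatever subteam the splitting disjunctions of $\theta$ produce. In a subteam $Y$, the atom $\tuple t \subseteq \tuple v$ says $Y(\tuple t)\subseteq Y(\tuple v)$, not $Y(\tuple t)\subseteq X(\tuple v)$, so completeness fails whenever the split loses values of $\tuple v$; dually, $\tuple t \mid \tuple v$ only gives disjointness from $Y(\tuple v)$, so \emph{soundness} of the rendering of $\lnot R\tuple t$ fails when $Y(\tuple v)\subsetneq X(\tuple v)$. The known proof resolves this with concrete machinery that your plan lacks: dependence atoms $=\!\!(\tuple u, y)$ (available since $\EXCL\equiv\DL$) to force the Skolem witnesses to be genuine functions, an encoding of guessed relations by the graphs of their characteristic functions so that both membership and non-membership become atoms, and splits arranged so that every assignment of the original team survives on each side, keeping the full spread of $X(\tuple v)$ and of the witness graphs available in every relevant subteam. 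This is the same phenomenon the present paper points out right after Theorem \ref{gfp-inc}, where values must be ``stored'' before splitting. Naming these difficulties, as you do in your points (i) and (ii), is not the same as solving them; as written, the hard direction is a plan whose two central steps are the unproved ones.
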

Therefore, Conditional Independence Logic is the strongest logic that can be obtained by adding $\Sigma_1^1$-definable dependencies which are true of the empty relation to First Order Logic. In particular, this implies that every $\DL$ formula (and, therefore, every $\EXCL$ formula) is equivalent to some $\CINDL$ formula.\footnote{This was already shown in \cite{gradel10}, in which it was shown that any dependence atom $=\!\!(\tuple t_1, \tuple t_2)$ is equivalent to the conditional independence atom $\indepr{\tuple t_1}{\tuple t_2}{\tuple t_2}$.} 
However, the converse is not true, since $\CINDL$ formulas are not, in general, downwards closed. 

Furthermore, \textbf{Inclusion/Exclusion Logic} $\IEL$ -- that is, the logic obtained by adding inclusion \emph{and} exclusion dependencies to First Order Logic -- was proved in \cite{galliani12} to be equivalent with $\CINDL$ with respect to formulas.

Finally, \textbf{Independence Logic} $\INDL$ is the logic obtained by adding only non-conditional dependence atoms $\indep{\tuple t_1}{\tuple t_2}$ to First Order Logic. As proved in \cite{vaananen13}, Independence Logic and Conditional Independence Logic are also equivalent with respect to formulas. 
\medskip

\textbf{Inclusion Logic} 
$\INCL$ is obtained by adding inclusion atoms to First Order Logic. It is not downwards closed, but it is \emph{closed under unions} in the following sense: if $\phi$ is an $\INCL$-formula,
$M$ is a model, and $X_i$, $i \in I$, are teams on $M$ such that $M \models_{X_i} \phi$ for all $i \in I$,
then $M \models_X \phi$, where $X=\bigcup_{i\in I} X_i$. (For a proof, see \cite{galliani12}).

Relatively little is known about the expressive power of $\INCL$, and the main purpose of the present work is precisely to remedy this. Here we only recall the following results from \cite{galliani12}:
\begin{enumerate}
\item On the level of formulas, $\INCL$
is strictly weaker than $\CINDL\equiv\INDL\equiv\Sigma^1_1$, and incomparable with $\DL\equiv\EXCL$.
\item The complement of the transitive closure of any first-order formula $\phi(\tuple x,\tuple y)$
is definable in $\INCL$; hence, $\INCL$ is strictly stronger than First Order Logic on sentences.
\item On the level of sentences, $\INCL$ is contained in $\Sigma^1_1$.
\end{enumerate} 

We give next a couple of further examples of the expressive power of $\INCL$.  

\begin{example}
(a) Consider the sentence $\phi:=\exists x\exists y(y\subseteq x\land Exy)$. 
Let $M=(\dom(M),E^M)$ be a finite model. Then
$M\models\phi$ if and only if $E^M$ contains a cycle, i.e., there are  $a_0,\ldots,a_{n-1}\in\dom(M)$
such that $(a_i,a_{i+1})\in E^M$ for all $i<n-1$, and $(a_{n-1},a_0)\in E^M$. 

The idea here is the following: by the lax semantics, the first existential quantifier gives a set $C$
of values for $x$, and the formula $\exists y(y\subseteq x\land Exy)$ then says that
for every $a\in C$ there is a $b\in C$ such that $(a,b)\in E^M$.

(b) Let $\psi$ be the  $\INCL$-sentence
$\exists w(\exists u(Pu\land u\subseteq w)\land
\forall u(Ewu\to\exists v(Euv\land v\subseteq w)))$.
Then $M\models\psi$ if and only if player I has a winning strategy in the following
game $G(M)$: Player I starts by choosing some element $a_0\in P^M$. In each odd round $i+1$,
player II chooses an element $a_{i+1}$ such that $(a_i,a_{i+1})\in E^M$. In each even round $i+1$,
player I chooses an element $a_{i+1}$ such that $(a_i,a_{i+1})\in E^M$. The first player unable to move
according to the rules, loses the game. Player I wins all infinite plays of the game. 

The class $K$ of all finite models $M$ such that player II has a winning strategy in $G(M)$
is an equivalent to Immerman's \emph{alternating graph accessibility problem}, $\mathrm{AGAP}$.
It is well known that $\mathrm{AGAP}$ is a complete problem for $\PTIME$ with respect
to quantifier free reductions (\cite{immerman87}). 
\end{example}

\subsection{Greatest Fixed Point Logic}

Let $ \psi(R, \tuple x)$ be a first-order formula such that the arity of $R$, $\ar(R)$, is equal 
to the length $k=|\tuple x|$ of the tuple $\tuple x$.
If $M$ is a model, then $\psi$ defines an operation $\Gamma=\Gamma_{M,\psi}$ on
the set $\parts(\dom(M)^k)$ of $k$-ary relations on $\dom(M)$ as follows: 
\[
	\Gamma(P):=\{\tuple a: (M,P)\models_{s[\tuple a/\tuple x]}\psi(R,\tuple x)\}
	\text{ for each }P\in\parts(\dom(M)^k).
\]
A relation $P$ is a \emph{fixed point} of the operation $\Gamma_{M,\psi}$ on $M$ if $\Gamma(P)=P$.
Furthermore, $P$ is the \emph{greatest fixed point} (\emph{least fixed point}) of $\Gamma_{M,\psi}$
if $Q\subseteq P$ ($P\subseteq Q$, respectively) for all fixed points $Q$ of $\Gamma_{M,\psi}$.

It is well known that if $R$ occurs only positively in $\psi$, then for every model $M$, 
$\Gamma_{M,\psi}$ has a greatest fixed point (as well as a least fixed point).
Moreover, the greatest fixed point $P$ of $\Gamma_{M,\psi}$ has the following characterization:
$P=\bigcup\{Q\subseteq\dom(M)^k: Q\subseteq \Gamma_{M,\psi}(Q)\}$ (see, e.g. \cite{libkin04}).

\begin{definition}
\emph{Greatest Fixpoint Logic}, $\GFP$, is obtained by adding to First Order Logic the 
\emph{greatest fixed point operator} $[\gfp_{R, \tuple x} \psi(R, \tuple x)] \tuple t$, where $R$ is a 
relation variable with $\ar(R)=|\tuple x|$, $\psi(R, \tuple x)$ is 
a formula in which $R$ occurs only positively, and $\tuple t$ is a tuple of terms with $|\tuple t|=|\tuple x|$. 
The semantics of the operator $\gfp$ is defined by the clause:
\begin{itemize}
\item $M\models_s [\gfp_{R, \tuple x} \psi(R, \tuple x)] \tuple t$  if and only if 
$\tuple t\langle s\rangle$ is in the greatest fixed point of $\Gamma_{M,\psi}$.
\end{itemize}

\emph{Positive Greatest Fixed Point Logic}, $\PGFP$, is the fragment of Greatest Fixed Point Logic in which fixed point operators occur only positively.

\emph{Least Fixpoint Logic}, $\LFP$, similarly, introduces an operator\\
$[\lfp_{R, \tuple x} \psi(R, \tuple x)] \tuple t$, again for $R$ occurring only positively in $\psi$, 
such that $M\models[\lfp_{R, \tuple x} \psi(R, \tuple x)] \tuple t$ if and only if 
$\tuple t \langle s\rangle$ is in the least fixed point of $\Gamma_{M,\psi}$.
\end{definition}

Fixed point logics have been the object of a vast amount of research, especially because of 
their applications in Finite Model Theory and Descriptive Complexity Theory.
In particular, Least Fixed Point Logic captures the complexity class $\PTIME$ that consists of
all problems that are solvable in polynomial time:

\begin{theorem}[\cite{immerman86,vardi82}]
A class of linearly ordered finite models is definable in $\LFP$ if and only if it can be recognized 
in $\PTIME$.
\end{theorem}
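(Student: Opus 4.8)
The plan is to prove the two inclusions separately, and only the second of them will require the linear order.

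For the direction $\LFP\subseteq\PTIME$, I would argue by induction on formula structure that every $\LFP$-formula can be evaluated in time polynomial in $n=|\dom(M)|$. First-order connectives and quantifiers contribute only polynomial overhead, since they range over $\dom(M)$. The crucial case is a subformula $[\lfp_{R,\tuple x}\psi(R,\tuple x)]\tuple t$: because $R$ occurs only positively, the operator $\Gamma_{M,\psi}$ is monotone, so its least fixed point is the limit of the increasing chain $\emptyset\subseteq\Gamma_{M,\psi}(\emptyset)\subseteq\Gamma_{M,\psi}^2(\emptyset)\subseteq\cdots$. This chain stabilises after at most $n^{k}$ steps, where $k=\ar(R)$, since each proper inclusion adds at least one new $k$-tuple. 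Each iteration requires a single evaluation of $\psi$ over the current value of $R$, which is polynomial by the inductive hypothesis, so the whole computation is polynomial. Note that this direction uses no order.

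For the direction $\PTIME\subseteq\LFP$, let $K$ be a class of linearly ordered finite structures recognised by a Turing machine running in time at most $n^c$. Using the given order I would identify $\dom(M)$ with $\{0,\dots,n-1\}$ and code numbers below $n^c$ by $c$-tuples of elements ordered lexicographically; the induced order, successor, minimum and maximum on such tuples are first-order definable from the order. I would then encode a run of the machine by a fixed-point-definable relation $\mathrm{Comp}(\tuple t,\tuple p)$ reading ``at time $\tuple t$ the content of cell $\tuple p$ is $1$'', together with auxiliary relations tracking the head position and the control state. The initial configuration is the standard encoding of $M$ (which includes its linear order), and this encoding is first-order definable from $M$; the content at time $\tuple t+1$ depends only on the cells $\tuple p-1,\tuple p,\tuple p+1$ at time $\tuple t$ and on the finite transition table, so the one-step update is first-order definable as well. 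Writing this simultaneous induction so that the fixed-point variable occurs only positively yields an $\LFP$-formula, and membership $M\in K$ is then equivalent to an $\LFP$-sentence asserting that an accepting state is eventually reached.

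The main obstacle lies in this second direction, and within it in the coding machinery: one must represent the polynomial clock and the tape as tuples, define both the initial configuration and the local one-step transition by first-order formulas, and package the state, head and tape relations into a single \emph{positive} induction so that the $\lfp$ operator applies. It is exactly here that the linear order is indispensable, since it is needed both to present the structure to the machine as a canonical string and to furnish the first-order-definable arithmetic on tuple-coded numbers; on unordered structures the corresponding statement is known to fail.
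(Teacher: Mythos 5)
The paper does not actually prove this statement: it is the Immerman--Vardi theorem, quoted with citations to \cite{immerman86,vardi82} and used as a black box, so there is no internal proof to compare yours against. Judged on its own merits, your proposal reconstructs the standard textbook argument correctly. The direction $\LFP\subseteq\PTIME$ is fine: positivity of $R$ gives monotonicity of $\Gamma_{M,\psi}$, the increasing chain of stages closes off after at most $n^k$ steps, the limit is indeed the least fixed point (it is a fixed point, and by induction every stage is contained in every fixed point), and each stage is a polynomial-time evaluation by the structural induction hypothesis. For $\PTIME\subseteq\LFP$ your architecture (tuple-coded clock and tape positions, first-order arithmetic supplied by the linear order, a first-order definable initial configuration and one-step transition, all packaged into a fixed-point induction) is the canonical one. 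The one place where your sketch names but does not discharge the crucial difficulty is the positivity of that induction: with a single relation $\mathrm{Comp}(\tuple t,\tuple p)$ meaning ``cell $\tuple p$ holds $1$ at time $\tuple t$'', the update rule must refer to cells holding $0$, i.e.\ to the \emph{negation} of $\mathrm{Comp}$, so the naive induction is not positive. The standard remedy is a simultaneous induction with one relation per tape symbol (including the blank/zero symbol) and per control state, so that all the needed negative information is itself recorded by a positive atom; uniqueness of the head position likewise lets ``the head is not at $\tuple p$'' be expressed positively as the existence of a head at some $\tuple p'\neq\tuple p$. Together with the reduction of simultaneous least fixed points to a single $\lfp$ operator, this closes the gap you flagged, so your proposal is a correct outline of the intended (and historically, the original) proof.
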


Another important result is that on finite models, Greatest Fixed Point Logic has the same 
expressive power as Least Fixed Point Logic. 

\begin{theorem}[\cite{immerman86}]
Over finite models, $\PGFP$ (as well as $\GFP$) is equivalent to $\LFP$.
\end{theorem}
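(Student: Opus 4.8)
The plan is to prove the two equivalences separately, treating $\GFP \equiv \LFP$ first (which holds over all models) and then upgrading to $\PGFP$ using finiteness. The backbone of both is the classical duality between least and greatest fixed points. Given a formula $\psi(R,\tuple x)$ in which $R$ occurs only positively, I would define its \emph{dual} $\psi^d(R,\tuple x)$ by replacing every atom $R\tuple s$ in $\psi$ by $\neg R\tuple s$, negating the result, and rewriting into negation normal form; since $R$ is positive in $\psi$ it remains positive in $\psi^d$, so $\psi^d$ is again a legitimate fixed-point matrix. A direct computation shows $\Gamma_{M,\psi^d}(P) = \overline{\Gamma_{M,\psi}(\overline P)}$ for every $P$, i.e. $\Gamma_{M,\psi^d}$ is the De Morgan dual of $\Gamma_{M,\psi}$. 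Combining the characterization $\gfp(\Gamma_{M,\psi}) = \bigcup\{Q : Q \subseteq \Gamma_{M,\psi}(Q)\}$ recalled in the excerpt with the dual characterization $\lfp(\Gamma) = \bigcap\{Q : \Gamma(Q)\subseteq Q\}$, I would conclude $\gfp(\Gamma_{M,\psi}) = \overline{\lfp(\Gamma_{M,\psi^d})}$, and hence the equivalences $[\gfp_{R,\tuple x}\psi]\tuple t \equiv \neg[\lfp_{R,\tuple x}\psi^d]\tuple t$ and, symmetrically, $[\lfp_{R,\tuple x}\psi]\tuple t \equiv \neg[\gfp_{R,\tuple x}\psi^d]\tuple t$.

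These equivalences are purely local: they rewrite a single operator in place while leaving the surrounding first-order and Boolean structure untouched. Applying them recursively to every fixed-point operator in a formula therefore translates any $\GFP$ formula into an equivalent $\LFP$ formula and vice versa, with no assumption on the models. This settles $\GFP \equiv \LFP$ over arbitrary (in particular finite) models, and since $\PGFP$ is by definition a fragment of $\GFP$ it also gives one half of the remaining claim, $\PGFP \le \LFP$.

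The substantial direction is $\LFP \le \PGFP$, and this is where finiteness is indispensable. My strategy is to show, by induction on formula structure, that over finite models $\PGFP$ is closed under negation; together with the first paragraph (every $\LFP$ formula becomes a $\GFP$ formula, and $\PGFP$ already contains all first-order formulas and is closed under $\wedge,\vee,\exists,\forall$ and under positive $\gfp$) this yields $\GFP \le \PGFP$ and hence $\LFP \le \PGFP$. All cases of the induction are routine De Morgan and quantifier dualities except the negation of a positive operator $[\gfp_{R,\tuple x}\psi]\tuple t$. By the duality of the first paragraph this negation is equivalent to a least fixed point $[\lfp_{R,\tuple x}\psi^d]\tuple t$ whose matrix $\psi^d$ is, by the induction hypothesis, already in $\PGFP$; so everything reduces to the single question of expressing a least fixed point as a \emph{positive} greatest fixed point.

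This last reduction is the main obstacle, and it is exactly here that finiteness must be used: over an infinite model a least fixed point is in general not greatest-fixed-point definable without a negation, whereas over a finite model $M$ the ascending approximation chain $\emptyset \subseteq \Gamma_{M,\psi^d}(\emptyset) \subseteq \cdots$ stabilizes at $\lfp(\Gamma_{M,\psi^d})$ after at most $|\dom(M)|^{|\tuple x|}$ steps. I would exploit this bounded iteration to define membership in the least fixed point positively --- for instance by capturing, via a (simultaneous) greatest fixed point, the well-founded ``stage of entry'' of a tuple into the inductive construction, in the style of the Stage Comparison Theorem; because the construction terminates on finite models this stage relation is itself fixed-point definable, and it lets one recover the least fixed point with the operator occurring positively. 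Verifying that this encoding is correct, and that it can be carried out uniformly so as to preserve positivity of all operators, is the technical heart of the argument; once it is in place the induction closes and both equivalences $\PGFP \equiv \LFP$ and $\GFP \equiv \LFP$ follow.
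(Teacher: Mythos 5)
The paper itself does not prove this statement---it is quoted from Immerman---so the benchmark is the classical argument. Measured against it, your first two paragraphs are correct and complete: the dualization $\psi\mapsto\psi^d$, the identity $\Gamma_{M,\psi^d}(P)=\overline{\Gamma_{M,\psi}(\overline{P})}$, and the resulting equivalences $[\gfp_{R,\tuple x}\psi]\tuple t\equiv\neg[\lfp_{R,\tuple x}\psi^d]\tuple t$ do give $\GFP\equiv\LFP$ over all structures, hence $\PGFP\le\LFP$; and your reduction of the remaining inclusion $\LFP\le\PGFP$ to the single question of expressing a least fixed point as a \emph{positive} greatest fixed point over finite models is also the right move. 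This is indeed the skeleton of Immerman's proof.

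The gap is that this single question \emph{is} the theorem, and you defer it. Two concrete problems remain. First, ``in the style of the Stage Comparison Theorem'' is not a construction. What is actually needed is (a) Moschovakis' theorem that the stage comparison relations of a positive induction (``enters no later than'' and its strict companion) are themselves definable by \emph{positive} simultaneous inductions---a nontrivial induction on formulas---and (b) the finite-closure argument expressing non-membership in the least fixed point positively. Step (b) is delicate: the naive formalization ``some $\tuple c$ closes the induction and $\tuple a$ has not entered by the stage of $\tuple c$'' reintroduces \emph{negated} occurrences of the stage comparison relation, which is exactly the circularity the theorem must break; the classical proof avoids it by routing all negative information through the strict relation $\prec$, and one must additionally eliminate the simultaneous fixed points while preserving positivity. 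Second, there is a structural mismatch in your induction: stage comparison, in the form in which it can be proved, applies to \emph{first-order} matrices, but your inductive step hands it the matrix $\psi^d$, which is a $\PGFP$ formula possibly containing nested fixed points. To make the induction close you need to strengthen the hypothesis to a normal-form statement---both $\phi$ and $\neg\phi$ are equivalent over finite models to formulas $\exists\tuple z\,[\gfp_{R,\tuple x}\chi]\tuple z$ with $\chi$ first-order---which is how Immerman's argument actually runs (and which, as a by-product, yields Theorem \ref{gfp-nf} of the paper). As it stands, your proposal correctly reduces the theorem to its classical core but does not prove that core.
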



We will also make use of the following normal form result for Positive Greatest Fixed Point Logic:

\begin{theorem}[\cite{moschovakis74,immerman86}]\label{gfp-nf}
Every $\PGFP$-sentence $\phi$ is equivalent to a $\PGFP$-sentence of the form
$\exists \tuple z\, [\gfp_{R,\tuple x}\,\psi(R,\tuple x)]\tuple z$, 
where $\psi$ is a first-order formula.
\end{theorem}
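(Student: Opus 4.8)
The plan is to prove the normal form by successively eliminating the structural complexity of an arbitrary $\PGFP$-sentence $\phi$: first collapsing nested and simultaneous fixed-point operators into a single greatest fixed point, and then absorbing the surrounding first-order structure into one final application of $\gfp$. Throughout, positivity is the feature that keeps every operator monotone, so that the greatest fixed points we manipulate actually exist and behave predictably; indeed the characterization $P=\bigcup\{Q: Q\subseteq\Gamma_{M,\psi}(Q)\}$ recalled above will be the principal computational tool.

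First I would establish a \emph{simultaneous fixed point} lemma. Given a system of formulas $\psi_1(R_1,\ldots,R_m,\tuple x),\ldots,\psi_m(R_1,\ldots,R_m,\tuple x)$ in which each $R_j$ occurs only positively, the induced operator on $\parts(\dom(M)^{k_1})\times\cdots\times\parts(\dom(M)^{k_m})$ is monotone and hence has a greatest simultaneous fixed point $(P_1,\ldots,P_m)$. I would show that each component $P_j$ is definable by a single $\gfp$ operator applied to a first-order body. The standard device is to pad all arities to the maximum $k=\max_j k_j$ and to adjoin extra coordinates encoding the index $j$, so that a single relation $R$ of arity $k+1$ (over a sufficiently large domain) simulates the whole tuple $(R_1,\ldots,R_m)$; the combined operator is then a positive first-order combination of the $\psi_j$ guarded by formulas reading off the tag, and its monotonicity follows from the positivity of each $R_j$ in each $\psi_j$.

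The heart of the argument is the \emph{nesting elimination} step. Suppose the body $\psi$ of some outermost operator $[\gfp_{R,\tuple x}\psi]$ itself contains a subformula $[\gfp_{S,\tuple y}\chi(S,\tuple y)]\tuple u$. Because $\phi\in\PGFP$, this inner operator occurs positively in $\psi$, and $S$ occurs positively in $\chi$. I would introduce $S$ as a new simultaneous relation variable, replace the inner $\gfp$-atom by the relational atom $S\tuple u$, and form the enlarged system computing $R$ and $S$ together. The key claim to verify is that the $S$-component of the greatest simultaneous fixed point of the enlarged system equals, relative to each stage of $R$, the greatest fixed point of $\Gamma_{M,\chi}$ --- that is, that computing everything ``at once'' agrees with the nested, inner-then-outer evaluation. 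This coincidence is established by a monotonicity and sandwiching argument resting on the positivity hypotheses and on the union characterization of greatest fixed points. Iterating the elimination removes all nesting, after which the simultaneous fixed point lemma collapses the resulting system to a single $\gfp$.

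Finally, the sentence $\phi$ now has the shape of a first-order formula built over a single atom $[\gfp_{R,\tuple x}\psi]\tuple t$ with $\psi$ first-order. One more application of the simultaneous trick --- treating the top-level first-order matrix as the defining formula of an additional relation computed alongside $R$ --- pulls the fixed point to the front and yields a sentence $\exists\tuple z\,[\gfp_{R,\tuple x}\psi'(R,\tuple x)]\tuple z$ with $\psi'$ first-order, as required. I expect the nesting elimination, and in particular the coincidence claim between simultaneous and nested evaluation, to be the main obstacle: it is here that positivity must be used delicately, and where the dual behaviour of $\gfp$ (stages descending from the full relation rather than ascending from the empty one) requires the whole argument to be set up as the exact dual of the familiar $\lfp$ case.
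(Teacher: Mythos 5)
The paper does not actually prove this theorem: it is imported from the cited literature (Moschovakis, Immerman), and your proposal is in essence a reconstruction of that classical argument --- a simultaneous greatest fixed point lemma, elimination of nesting via the Beki\'c-style coincidence of simultaneous and nested evaluation, and a final absorption of the surrounding first-order matrix into one application of $\gfp$. So the route is the standard (and correct) one, and the positivity/monotonicity bookkeeping you describe is exactly what makes each step go through.

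Two technical points would need genuine attention in a complete write-up. First, the tagging device you use to merge a system $(R_1,\ldots,R_m)$ into a single relation --- padding arities and adjoining coordinates that encode the index $j$ --- requires at least two distinct elements in the domain to realize the tags; on one-element structures it fails outright, and since the theorem quantifies over all models, these degenerate structures must be disposed of separately (this is easy: over a one-element domain every $\PGFP$-sentence is decided by the finitely many atomic facts, hence is equivalent there to a trivial instance of the normal form, and the two cases can be recombined). Second, when you replace an inner subformula $[\gfp_{S,\tuple y}\,\chi(S,\tuple y)]\tuple u$ by an atom $S\tuple u$, note that $\chi$ may contain free first-order variables other than $\tuple y$ (parameters bound by quantifiers of the outer body); the relation variable $S$ in the simultaneous system must then be given extra coordinates for these parameters, and the atom replacing the inner operator becomes $S\tuple u\tuple w$ where $\tuple w$ is the parameter tuple. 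Neither point threatens the strategy --- both are handled in the standard sources --- but as literally stated your nesting-elimination step is incomplete without them.
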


\section{Inclusion Logic captures $\PGFP$}
\label{sub:defIL}

We will now prove that Inclusion Logic has exactly the same expressive power as 
Positive Greatest Fixed Point Logic. Since the semantics of $\PGFP$ is defined
in terms of single assignments instead of teams, the equivalence of $\INCL$ and 
$\PGFP$ on formulas has to be formulated in a bit indirect way; see Theorems \ref{inc-gfp}
and \ref{gfp-inc} below.

We start with a lemma that connects teams and the greatest fixed point operator:

\begin{lemma}\label{gfp-lemma} Let $\psi(S,\tuple x)$ a $\PGFP$-formula with free variables in\\
$\tuple x =(x_1,\ldots,x_n)$
such that $S$ is $n$-ary and occurs only positively in $\psi$, let $M$ be a model, and let $Y$ a team on $M$.
\begin{description}
\item[(a)] If $(M,Y(\tuple x))\models_s \psi(S,\tuple x)$ for all $s\in Y$, then 
$M\models_s [\gfp_{S,\tuple x}\,\psi(S,\tuple x)]\tuple x$ for all $s\in Y$.

\item[(b)] If $Y$ is a maximal team such that $M\models_s [\gfp_{S,\tuple x}\,\psi(S,\tuple x)]\tuple x$ 
for all $s\in Y$, then $(M,Y(\tuple x))\models_s \psi(S,\tuple x)$ for all $s\in Y$.
\end{description}
\end{lemma}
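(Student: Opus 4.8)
The plan is to prove both parts by exploiting the characterization of the greatest fixed point as the union of all \emph{post-fixed points}, namely $P = \bigcup\{Q : Q \subseteq \Gamma_{M,\psi}(Q)\}$, which is recalled in the excerpt. Let $\Gamma = \Gamma_{M,\psi}$ and write $P$ for the greatest fixed point, so that $M \models_s [\gfp_{S,\tuple x}\,\psi]\tuple x$ holds exactly when $\tuple x \langle s\rangle \in P$.

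For part \textbf{(a)}, I would set $Q := Y(\tuple x)$. The hypothesis says that $(M,Q)\models_s \psi(S,\tuple x)$ for every $s \in Y$, which by the definition of $\Gamma$ means $\tuple x\langle s\rangle \in \Gamma(Q)$ for every $s \in Y$. Since every element of $Q = Y(\tuple x)$ is of the form $\tuple x\langle s\rangle$ for some $s \in Y$, this immediately gives $Q \subseteq \Gamma(Q)$, i.e.\ $Q$ is a post-fixed point of $\Gamma$. By the characterization of the greatest fixed point, $Q \subseteq P$. Hence for every $s \in Y$ we have $\tuple x\langle s\rangle \in Q \subseteq P$, which is precisely $M\models_s [\gfp_{S,\tuple x}\,\psi]\tuple x$. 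The only subtlety I need to flag is the monotonicity of $\Gamma$ (guaranteed because $S$ occurs only positively), which is what justifies both the existence of $P$ and its post-fixed-point characterization.

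For part \textbf{(b)}, suppose $Y$ is a \emph{maximal} team satisfying $M\models_s [\gfp_{S,\tuple x}\,\psi]\tuple x$ for all $s \in Y$; I must show $(M,Y(\tuple x))\models_s \psi(S,\tuple x)$ for all $s \in Y$. The idea is to argue that maximality forces $Y(\tuple x) = P$. Indeed, each $s\in Y$ has $\tuple x\langle s\rangle \in P$, so $Y(\tuple x) \subseteq P$. Conversely, I would show that any assignment $s'$ with $\tuple x\langle s'\rangle \in P$ could be added to $Y$ without violating the defining condition; since $P$ is nonempty whenever it contains any of the values, maximality of $Y$ then yields that $Y(\tuple x)$ covers all of $P$, giving $Y(\tuple x) = P$. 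Once $Y(\tuple x) = P$, I use that $P$ is a \emph{fixed} point, so $\Gamma(P) = P$; for any $s\in Y$ we have $\tuple x\langle s\rangle \in P = \Gamma(P) = \Gamma(Y(\tuple x))$, which unwinds by the definition of $\Gamma$ to $(M, Y(\tuple x))\models_s \psi(S,\tuple x)$, as required.

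The main obstacle is making the maximality argument in part (b) precise, since ``maximal team'' must be interpreted carefully: the relevant notion is maximality with respect to the \emph{values} $\tuple x\langle s\rangle$ rather than with respect to full assignments (two distinct assignments may agree on $\tuple x$). I would handle this by noting that the truth of $[\gfp_{S,\tuple x}\,\psi]\tuple x$ at $s$ depends only on $\tuple x\langle s\rangle$, so without loss of generality the maximal team $Y$ realizes exactly the set $P$ as $Y(\tuple x)$; any assignment whose $\tuple x$-value lies in $P$ satisfies the formula and hence belongs to a maximal such team. The remaining steps are then routine applications of the fixed-point identity.
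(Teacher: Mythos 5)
Your proposal is correct and follows essentially the same route as the paper's own (much terser) proof: part (a) via the characterization of the greatest fixed point as the union of all post-fixed points $Q \subseteq \Gamma_{M,\psi}(Q)$ applied to $Q = Y(\tuple x)$, and part (b) via the observation that maximality forces $Y(\tuple x)$ to equal the greatest fixed point, after which the fixed-point identity $\Gamma(P)=P$ finishes the argument. Your additional care about maximality being governed by the values $\tuple x\langle s\rangle$ rather than by full assignments is a legitimate detail that the paper simply leaves implicit.
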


\begin{proof}
Note that $(M,Y(\tuple x))\models_s \psi(S,\tuple x)$ for all $s\in Y$ if and only if 
$Y(\tuple x)\subseteq\Gamma_{M,\psi}(Y(\tuple x))$. 
Thus, claim (a) follows from the fact that the greatest fixed point of $\Gamma_{M,\psi}$
is the union of all relations $Q$ such that $Q\subseteq \Gamma_{M,\psi}(Q)$.
Claim (b) follows from the observation that if $Y$ is a maximal team such that 
$M\models_s [\gfp_{S,\tuple x}\,\psi(S,\tuple x)]\tuple x$ 
for all $s\in Y$, then $Y(\tuple x)$ is the greatest fixed point of $\Gamma_{M,\psi}$.
\end{proof}

We will next prove that every $\INCL$-formula can be expressed in $\PGFP$.

\begin{theorem}\label{inc-gfp}
For every $\INCL$-formula $\phi(\tuple x)$ with free variables in $\tuple x=(x_1,\ldots,x_n)$ there is a 
$\PGFP$-formula $\phi^*=\phi^*(R,\tuple x)$ such that $\ar(R)=|\tuple x|$, $R$ occurs only 
positively in $\phi^*$, and the condition
\[
	M \models_X \phi(\tuple x) \iff  (M,X(\tuple x)) \models_s \phi^*(R,\tuple x)\text{ for all $s \in X$} 
\]
holds for all models $M$ and teams $X$ with $\dom(X)=\{x_1,\ldots,x_n\}$. 
\end{theorem}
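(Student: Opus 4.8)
The plan is to prove the equivalence by structural induction on the $\INCL$-formula $\phi(\tuple x)$, constructing the translation $\phi^*(R,\tuple x)$ so that $R$ records membership of a tuple in the team's relation $X(\tuple x)$. The base case concerns first-order literals and inclusion atoms: for a literal $\alpha$, by Theorem \ref{thm:TS-FOL} the team satisfaction $M\models_X\alpha$ reduces to $M\models_s\alpha$ for each $s$, so I would set $\alpha^* := \alpha$, and the relation $R$ need not occur. For an inclusion atom $\tuple t_1\subseteq\tuple t_2$, the condition $M\models_X\tuple t_1\subseteq\tuple t_2$ says $X(\tuple t_1)\subseteq X(\tuple t_2)$; here is where $R$ enters, since membership in $X$ must be expressed via $R$. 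I would translate it as something like $\forall \tuple y\,(R\tuple y \to \exists \tuple y'(R\tuple y' \land \tuple t_1\langle \tuple y\rangle = \tuple t_2\langle \tuple y'\rangle))$, capturing that every $\tuple t_1$-value realized in the team is also a $\tuple t_2$-value; note $R$ occurs only positively.

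Next I would handle the connectives. Conjunction is straightforward: since TS-$\wedge$ is just pointwise conjunction of the two team conditions, I set $(\phi\land\psi)^* := \phi^*\land\psi^*$, and positivity of $R$ is preserved. Disjunction is the first genuinely delicate case, because TS-$\vee$ splits $X$ into $Y\cup Z$. The trick is that under the fixed-point reading, $R$ is interpreted as the whole relation $X(\tuple x)$, and I cannot directly refer to the subteam relations $Y(\tuple x)$ and $Z(\tuple x)$. The standard device is to existentially quantify over auxiliary unary "flags" or to introduce the split via a new relation; but since we are confined to $\PGFP$ with a single $R$ and only positive occurrences, I would instead guess the split using fresh first-order existential quantifiers encoding, for each tuple, whether it lies in the $Y$-part or the $Z$-part. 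Concretely, I would write $(\phi\lor\psi)^*$ so that it asserts the current point satisfies $\phi^*$ relativized to the guessed $Y$-part or $\psi^*$ relativized to the guessed $Z$-part; care is needed so that the guessed relation is consistent across all points, which is exactly where the fixed-point/second-order flavour is unavoidable.

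The quantifier cases, especially the existential, are where I expect the main obstacle. For $\exists v$, the lax semantics TS-$\exists$ supplies a non-deterministic choice function $H$, so the new team $X[H/v]$ has underlying relation $\{(\tuple a,m) : \tuple a\in X(\tuple x),\, m\in H(\tuple a)\}$, an arbitrary relation projecting onto $X(\tuple x)$ with nonempty fibres. Translating this into $\PGFP$ requires expressing "there exists a relation over the extended tuple, with full projection and nonempty fibres, on which $\phi^*$ holds at every point." Because $\phi^*$ is evaluated against the new relation (not $R$), the clean way is to fold this choice into the greatest-fixed-point machinery rather than into a single first-order formula, invoking Lemma \ref{gfp-lemma} to pass between "$\psi^*$ holds at every point of a team" and "every point lies in the greatest fixed point." The hard part will be organizing the induction hypothesis so that the relation $R$ at each level correctly tracks the correct subteam, and verifying that all newly introduced fixed-point operators and occurrences of $R$ remain positive; the universal case $\forall v$ is comparatively routine, as $X[M/v]$ is determined and I can relativize $\phi^*$ to the extended relation $R$ augmented by ranging $v$ over all of $\dom(M)$. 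Throughout, I would maintain positivity of $R$ as an explicit invariant of the construction and check it in each inductive step.
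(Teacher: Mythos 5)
Your proposal contains two genuine gaps, one in the base case and one at the heart of the induction. First, your translation of the inclusion atom, $\forall\tuple y\,(R\tuple y\to\exists\tuple y'(R\tuple y'\land\dots))$, places $R$ in the antecedent of an implication, which is a \emph{negative} occurrence of $R$. This is not a cosmetic defect: positivity of $R$ is part of the very statement you are proving, and it is load-bearing. In the inductive cases the translated subformula, with its relation variable renamed to a fresh $S$, must be placed under a $\gfp$ operator binding $S$, and $[\gfp_{S,\tuple x}\,\dots]$ is only well-formed when $S$ occurs positively; moreover Corollary \ref{incgfpeq} applies $[\gfp_{R,x}\,\phi^*(R,x)]x$ to the output of this theorem. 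With your base case, the whole construction becomes ill-formed as soon as an inclusion atom sits inside a disjunction or a quantifier. The repair is to notice that the clause ``for all $s\in X$'' is \emph{external} to $\phi^*$, so the translation need only speak about the current assignment: the paper takes $\phi^*(R,\tuple x):=\exists\tuple z(R\tuple z\land\tuple t_1(\tuple x)=\tuple t_2(\tuple z))$, which is positive in $R$ and, when required of every $s\in X$ with $R$ interpreted as $X(\tuple x)$, expresses exactly $X(\tuple t_1)\subseteq X(\tuple t_2)$.

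Second, for disjunction and existential quantification you correctly diagnose the obstacle (the split $X=Y\cup Z$, respectively the choice function $H$, amounts to quantification over relations, which first-order existential quantifiers cannot encode), but you never resolve it: ``guessing the split with fresh first-order quantifiers'' cannot label every tuple of $R$ consistently, as you yourself concede. The missing idea is that \emph{the $\gfp$ operator itself is the second-order quantifier over subteams}: by Lemma \ref{gfp-lemma}, since the greatest fixed point is the union of all $Q$ with $Q\subseteq\Gamma(Q)$, the formula $[\gfp_{S,\tuple x}\,(R\tuple x\land\psi^*(S,\tuple x))]\tuple x$ holds at $s$ if and only if $s$ lies in some team $Y$ with $Y(\tuple x)\subseteq X(\tuple x)$ and $(M,Y(\tuple x))\models_h\psi^*(S,\tuple x)$ for all $h\in Y$ --- by the induction hypothesis, in some subteam of $X$ satisfying $\psi$. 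The conjunct $R\tuple x$ is what confines the guessed team to $X$. Disjunction is then translated as a disjunction of two such fixed-point formulas (with fresh variables $S$ and $T$), and the existential case as $\exists v\,[\gfp_{S,\tuple x v}\,(R\tuple x\land\psi^*(S,\tuple x v))]\tuple x v$, where maximality of the induced choice function gives the converse direction via Lemma \ref{gfp-lemma}(b). You invoke that lemma only for the existential case and produce no formula at all for disjunction, so the inductive step is not established; note also that the paper handles the universal quantifier with the same $\gfp$ construction rather than by the ``routine relativization'' you suggest.
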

\begin{proof}
The proof is by structural induction on $\phi$.
\begin{enumerate}

\item If $\phi(\tuple x)$ is a first-order literal, let $\phi^*(R,\tuple x)$ be just $\phi(\tuple x)$. 
Then we have
\begin{align*}
M \models_X \phi(\tuple x) &\iff M \models_s \phi(\tuple x)\text{ for all $s \in X$}\\
&\iff (M,X(\tuple x)) \models_s \phi(\tuple x)\text{ for all $s \in X$}.
\end{align*}

\item If $\phi(\tuple x)$ is an inclusion atom $\tuple t_1 \subseteq \tuple t_2$, let $\phi^*(R,\tuple x)$ be 
$\exists \tuple z(R \tuple z \land \tuple t_1(\tuple x) = \tuple t_2(\tuple z))$,
where $\tuple z$ is a tuple of new variables. Note that 
$(M,X(\tuple x)) \models_h  \tuple t_1(\tuple x) = \tuple t_2(\tuple z)$ for an assignment 
$h$ defined on $\tuple x\tuple z$ if and only if there are two assignments 
$s,s'$ defined on $\tuple x$ such that  $\tuple t_1\langle s\rangle=\tuple t_2\langle s'\rangle$
and  $h= s\cup (s'\circ f)$, where $f$ is the function $f(z_i)=x_i$.
Thus, we see that $(M, X(\tuple x))\models_s \phi^*(R,\tuple x)$ for all $s\in X$
if and only if for every $s\in X$ there is an $s'\in X$ such that 
$\tuple t_1\langle s\rangle=\tuple t_2\langle s'\rangle$, as desired.

\item Assume next that $\phi(\tuple x)$ is of the form $\psi(\tuple x) \lor \theta(\tuple x)$. 
Then we define  
$$
	\phi^*(R,\tuple x):=[\gfp_{S,\tuple x}\,(R\tuple x\land\psi^*(S,\tuple x))]\tuple x\lor 
	[\gfp_{T,\tuple x}\,(R\tuple x\land\theta^*(T,\tuple x))]\tuple x.
$$
If $M\models_X\phi(\tuple x)$, then there are teams $Y$ and $Z$ such that $X=Y\cup Z$,
$M\models_Y\psi(\tuple x)$ and $M\models_Z\phi(\tuple x)$. By induction hypothesis,
$(M,Y(\tuple x))\models_s\psi^*(S,\tuple x)$, and consequently
$(M,X(\tuple x),Y(\tuple x))\models_s R\tuple x\land\psi^*(S,\tuple x)$, holds for all $s\in Y$. 
Hence, by Lemma \ref{gfp-lemma},
$(M,X(\tuple x))\models_s [\gfp_{S,\tuple x}\,(R\tuple x\land\psi^*(S,\tuple x))]\tuple x$
holds for all $s\in Y$.

In the same way we see that
$(M,X(\tuple x))\models_s [\gfp_{T,\tuple x}\,(R\tuple x\land\theta^*(T,\tuple x))]\tuple x$
holds for all $s\in Z$. Thus, we conclude that $(M,X(\tuple x))\models_s \phi^*(R,\tuple x)$
for all $s\in X$.

To prove the converse, assume that $(M,X(\tuple x))\models_s \phi^*(R,\tuple x)$ for all $s\in X$.
Let $Y$ be the set of  all assignments $s\in X$ that satisfy the first disjunct of $\phi^*(R,\tuple x)$,
and let $Z$ be the set of assignments $s\in X$ that satisfy the second disjunct. Then  
$Y$ is the maximal team such that, for all $s\in Y$,
$(M,X(\tuple x))\models_s[\gfp_{S,\tuple x}\,(R\tuple x\land\psi^*(S,\tuple x))]\tuple x$. 
It follows from Lemma~\ref{gfp-lemma} that
$(M,X(\tuple x),Y(\tuple x))\models_s R\tuple x\land \psi^*(S,\tuple x)$ for all $s\in Y$.
Thus, $(M,Y(\tuple x))\models_s \psi^*(S,\tuple x)$ for all $s\in Y$, and by induction hypothesis,
$M\models_Y\psi(\tuple x)$. In the same way we see that $M\models_Z\theta(\tuple x)$.
Finally, since $X=Y\cup Z$, we conclude that $M\models_X\phi(\tuple x)$.

\item If $\phi(\tuple x)=\psi(\tuple x) \land \theta(\tuple x)$, we define simply
$\phi^*(R,\tuple x):=\psi^*(R,\tuple x)\land \theta^*(R,\tuple x)$. The claim
follows then directly from the induction hypothesis.

\item If $\phi(\tuple x)$ is of the form $\exists v \,\psi(\tuple x v)$, let $\phi^*(R,\tuple x)$ be 
\[\exists v [\gfp_{S,\tuple x v}\,(R\tuple x\land\psi^*(S,\tuple x v))]\tuple x v\].
Then $M\models_X \phi(\tuple x)$ if and only if there is a function 
$H\in\CFMX$ such that $M\models_Y\psi(\tuple x v)$,
where $Y=X[H/v]$. By the induction hypothesis, this is equivalent to
$(M,Y(\tuple x v))\models_h \psi^*(S,\tuple x v)$ being true for all $h\in Y$.
This, in turn, is equivalent with the condition
\begin{equation}\label{excond}
	(M,X(\tuple x),Y(\tuple x v))\models_h R\tuple x\land\psi^*(S,\tuple x v)\;
	\text{ for all $h\in Y$}.
\end{equation}
If condition (\ref{excond}) holds, then by Lemma \ref{gfp-lemma},
$(M,X(\tuple x))\models_h[\gfp_{S,\tuple x v}\,(R\tuple x\land\psi^*(S,\tuple x v))]\tuple x v$
holds for all $h\in Y$. Since every $s\in X$ has an extension $h\in Y$, it follows that
$(M,X(\tuple x))\models_s\phi^*(R,\tuple x)$ for all $s\in X$.

On the other hand, if $(M,X(\tuple x))\models_s\phi^*(R,\tuple x)$ for all $s\in X$, we
define $H\in\CFMX$ to be the function such that
\[
	H(s):=\{a\in \dom(M): (M,X(\tuple x))\models_{s[a/v]}[\gfp_{S,\tuple x v}\,
	(R\tuple x\land\psi^*(S,\tuple x v))]\tuple x v\},
\]
and let $Y=X[H/v]$.
Then $Y$ is the maximal team such that
\[(M,X(\tuple x))\models_{h}[\gfp_{S,\tuple x v}\,(R\tuple x\land\psi^*(S,\tuple x v))]\tuple x v\]
for all $h\in Y$, whence condition (\ref{excond}) follows from Lemma \ref{gfp-lemma}.

\item If $\phi(\tuple x)$ is of the form $\forall v \,\psi(\tuple x v)$, let $\phi^*(R,\tuple x)$ be 
\[\forall v [\gfp_{S,\tuple x v}\,(R\tuple x\land\psi^*(S,\tuple x v))](\tuple x v).\] The proof of the claim
is similar to the case of existential quantification.

\end{enumerate}
\end{proof}


In proving that $\PGFP$-sentences can be expressed in $\INCL$
we will use the normal form given in Theorem \ref{gfp-nf}. Thus, it suffices to find 
translations for first-order formulas,
and formulas obtained by a single application of the $\gfp$-operator to first-order formulas.

\begin{theorem}\label{gfp-inc}
Let $\eta(R,\tuple x,\tuple y)$ be a first-order formula 
such that $R$ occurs only positively in $\eta$, $\ar(R)=|\tuple x|=n$, and the free variables
of $\eta$ are in $\tuple x\tuple y$. 
\begin{description}
\item[(a)] There exists an 
$\INCL$-formula $\eta^+(\tuple x,\tuple y)$ such that for all models $M$ and teams $X$ on $M$
\[
	M \models_X \eta^+(\tuple x,\tuple y) \iff  
	(M,X(\tuple x))\models_s\eta(R,\tuple x,\tuple y)
	\text{ for every }s\in X
\]

\item[(b)] If $\tuple y$ is empty, and $\tuple z$ is an $n$-tuple of variables not occurring in $\eta$,
then there exists an $\INCL$-formula $\tilde\eta(\tuple z)$ such that for all models $M$ and 
teams $X$ on $M$
\[
	M \models_X \tilde\eta(\tuple z) \iff  
	M\models_s [\gfp_{R,\tuple x}\,\eta(R,\tuple x)]\tuple z
	\text{ for every }s\in X
\]
\end{description}
\end{theorem}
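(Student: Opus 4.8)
The plan is to derive part (b) from part (a), and to prove part (a) by induction on $\eta$. For (b), since $\tuple y$ is empty, the formula $\eta(R,\tuple x)$ induces the operator $\Gamma_{M,\eta}$, whose greatest fixed point $P$ satisfies $P=\bigcup\{Q:Q\subseteq\Gamma_{M,\eta}(Q)\}$. I would set
\[
 \tilde\eta(\tuple z):=\exists\tuple x\,(\tuple z\subseteq\tuple x\land\eta^+(\tuple x)),
\]
with $\eta^+$ the formula provided by (a). Using the lax rule for $\exists\tuple x$, a witnessing choice produces a team $X'=X[H/\tuple x]$; by (a), $M\models_{X'}\eta^+(\tuple x)$ is equivalent to $X'(\tuple x)\subseteq\Gamma_{M,\eta}(X'(\tuple x))$, hence to $X'(\tuple x)\subseteq P$, while $M\models_{X'}\tuple z\subseteq\tuple x$ says $X(\tuple z)=X'(\tuple z)\subseteq X'(\tuple x)$. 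For the forward direction this forces $X(\tuple z)\subseteq P$; conversely, if $X(\tuple z)\subseteq P$ one takes $H(s)=P$ for every $s$ (so $X'(\tuple x)=P$, which is self-consistent as a fixed point), the case $P=\emptyset$ being absorbed by the empty-team property. This is exactly the assertion that $\tuple z\langle s\rangle$ lies in the greatest fixed point for every $s\in X$.

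For part (a) I would proceed by structural induction, translating the atom $R\tuple t$ by the inclusion atom $\tuple t\subseteq\tuple x$: indeed $M\models_X\tuple t\subseteq\tuple x$ means $X(\tuple t)\subseteq X(\tuple x)$, i.e. $\tuple t\langle s\rangle\in X(\tuple x)$ for every $s$, which is precisely $(M,X(\tuple x))\models_s R\tuple t$ for all $s$. First-order literals are left unchanged, their team and pointwise semantics agreeing by Theorem \ref{thm:TS-FOL}. The connectives $\land$, $\exists v$ and $\forall v$ are unproblematic because the corresponding team operations ($X$, $X[H/v]$, $X[M/v]$) all leave the projection $X(\tuple x)$ unchanged, so the induction hypothesis applies with the interpretation of $R$ still equal to the full projection.

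The one genuinely delicate case --- and the main obstacle --- is disjunction. The naive choice $\psi^+\lor\theta^+$ fails: splitting $X=Y\cup Z$ replaces the interpretation $X(\tuple x)$ by the smaller $Y(\tuple x)$, $Z(\tuple x)$, and since $R$ occurs only positively, monotonicity rescues only the soundness direction; completeness breaks because a point sent to the $\psi$-disjunct may need, as an element of $R$, the $\tuple x$-value of a point forced into the $\theta$-disjunct. My remedy is to never split the relation-carrying team. I would strengthen the induction to a guarded translation $\psi^+_G$, indexed by a first-order guard $G$, with intended meaning
\[
 M\models_X\psi^+_G\iff(M,X(\tuple x))\models_s\psi\ \text{for all }s\in X\text{ with }M\models_s G,
\]
and set $\eta^+:=\eta^+_\top$. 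Relation atoms become the guarded inclusion device $\exists\tuple r\big(((G\land\tuple r=\tuple t)\lor(\lnot G\land\tuple r=\tuple x))\land\tuple r\subseteq\tuple x\big)$, which tests $\tuple t$ against the full projection on the $G$-points while trivialising the others (the inner disjunction is relation-free, hence flat and harmless); first-order literals become $\lnot G\lor\alpha$; conjunction and the quantifiers distribute the guard unchanged; and a disjunction $\psi\lor\theta$ is handled by routing each point to a branch with a fresh variable, e.g. $\exists d\,\exists d'\,(\psi^+_{G\land d=d'}\land\theta^+_{G\land d\neq d'})$, both conjuncts being evaluated on the \emph{whole} extended team so that $X(\tuple x)$ is never diminished.

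The bulk of the verification, and the step I expect to be hardest, is checking the disjunction clause in both directions: soundness uses that off-guard points satisfy every guarded subformula trivially, while completeness uses the lax existential to route each point into a branch it actually satisfies, which requires two distinct elements to realise both $d=d'$ and $d\neq d'$. The degenerate one-element case (and, more generally, an empty domain or empty team) must be dispatched separately, but there the pointwise condition is immediate. Positivity of $R$ enters only to guarantee monotonicity of $\eta$ in $R$, which secures the soundness half of each inductive step.
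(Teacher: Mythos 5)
Your part (b) is exactly the paper's argument: the same formula $\exists\tuple x(\tuple z\subseteq\tuple x\land\eta^+(\tuple x))$, with the forward direction via ``post-fixed points are contained in the greatest fixed point'' and the converse via the constant choice function $H(s)=P$. One caution there: your phrase ``equivalent to $X'(\tuple x)\subseteq\Gamma_{M,\eta}(X'(\tuple x))$, hence to $X'(\tuple x)\subseteq P$'' asserts an equivalence that is false in general (a subset of the greatest fixed point need not be a post-fixed point), but the two implications you actually use --- post-fixed implies $\subseteq P$, and $P$ itself is post-fixed --- are the correct ones, so nothing breaks.

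In part (a) you isolate precisely the obstacle the paper isolates (splitting the team at a disjunction shrinks $X(\tuple x)$), but you resolve it by a genuinely different device. The paper stores the projection in a fresh tuple, translating $\alpha\lor\beta$ as $\exists\tuple u\big(\tuple u\subseteq\tuple x\land(\alpha^+(\tuple u,\tuple x\tuple y)\lor\beta^+(\tuple u,\tuple x\tuple y))\big)$, so the relation argument of the induction hypothesis is re-parameterized to $\tuple u$; its soundness direction then really uses positivity of $R$ via monotonicity (from $Y(\tuple u)\subseteq X(\tuple x)$), and its completeness direction uses the choice $H(s)=X(\tuple x)$, under which every nonempty half of the split retains the full projection. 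Your guarded translation instead never splits the team: disjuncts are routed by fresh equality variables, both guarded conjuncts are evaluated on the whole extended team, and your gadget for $R\tuple t$ under a guard does verify correctly (the inner disjunction is flat, so it pins $\tuple r$ pointwise to $\tuple t$ on guard points and to $\tuple x$ elsewhere, and the inclusion atom then tests exactly the guard points). Note, however, that in your version positivity is used only \emph{structurally} --- there is simply no clause for $\lnot R\tuple t$ --- and monotonicity is never invoked, contrary to your closing remark. The one genuine gap is the one-element domain: there $d\neq d'$ is unsatisfiable, your disjunction clause collapses to requiring that every guard point satisfy $\psi$, and ``the pointwise condition is immediate'' is not a fix, because the theorem demands a single formula that is uniform in $M$. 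The standard repair is a definable case split such as $(\sigma_1\land\forall u\forall v\, u=v)\lor(\sigma_2\land\exists u\exists v\, u\neq v)$, where $\sigma_2$ is your guarded translation and $\sigma_1$ is the naive translation, which \emph{is} correct on singleton-domain models since every team there has at most one assignment and splitting cannot shrink projections; flatness of the two cardinality sentences forces the entire team into the appropriate disjunct. With that patch your proof is complete.
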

\begin{proof}
(a) We prove the claim by structural induction on $\eta$. 
\begin{enumerate}

\item If $\eta(R, \tuple x,\tuple y)$ is a first-order literal not containing the relation symbol $R$, 
we define $\eta^+:=\eta$. 
Then $M\models_X\eta^+$ if and only if $M\models_s\eta$ for every $s\in X$. Since $R$ does
not occur in $\eta$, this is equivalent with $(M, X(\tuple x))\models_s \eta$ for all $s\in X$, as 
required.

\item If $\eta$ is of the form $R \tuple t$, we define $\eta^+(\tuple x,\tuple y):={\tuple t \subseteq \tuple x}$.
Then we have
\begin{align*}
M \models_X \eta^+(\tuple x,\tuple y) &\iff \forall s\in X\,\exists s'\in X:
 \tuple t\langle s\rangle=\tuple x\langle s'\rangle\\
&\iff\forall s\in X: \tuple t\langle s\rangle\in X(\tuple x)\\
&\iff \forall s\in X: (M,X(\tuple x)) \models_s R\tuple t.
\end{align*}

\item If $\eta$ is of the form $\alpha(R, \tuple x,\tuple y) \lor \beta(R, \tuple x,\tuple y)$, 
let $\tuple u=(u_1,\ldots,u_n)$ be a tuple of new variables and let $\eta^+(\tuple x,\tuple y)$ be the formula
\[
	 \exists \tuple u \Big( (\tuple u \subseteq \tuple x) \land (\alpha^+(\tuple u, \tuple x \tuple y) \lor \beta^+(\tuple u, \tuple x \tuple y)) \Big).
\]
Here we assume as induction hypothesis that $M \models_Y \alpha^+(\tuple u, \tuple x\tuple y)$ if and only if  $(M, Y(\tuple u)) \models_h \alpha(R, \tuple x, \tuple y)$ for all $h \in Y$, and similarly for
$\beta^+(\tuple u, \tuple x\tuple y)$ and $\beta(R, \tuple x, \tuple y)$.


Suppose first that $M \models_X \eta^+(\tuple x, \tuple y)$. Then there is a function $H \in \CFMXn$ such that $X[H/\tuple u](\tuple u) \subseteq X(\tuple x)$, and furthermore, $X[H/\tuple u]$ can be split into two subteams $Y$ and $Z$ such that $M \models_Y \alpha^+(\tuple u, \tuple x \tuple y)$ and $M \models_Z \beta^+(\tuple u, \tuple x \tuple y)$. Now take any $s \in X$ and let $h\in X[H/\tuple u]$ be an extension of $s$. If $h \in Y$ then $(M, Y(\tuple u)) \models_h \alpha(R, \tuple x, \tuple y)$. Since $Y(\tuple u) \subseteq X[H/\tuple u](\tuple u) \subseteq X(\tuple x)$, $\tuple x \tuple y\langle h\rangle = \tuple x \tuple y\langle s\rangle$ and $R$ occurs only positively in $\alpha$, we have $(M, X(\tuple x)) \models_s \alpha(R, \tuple x, \tuple y)$. Similarly, if $h \in Z$ then $(M, X(\tuple x)) \models_s \beta(R, \tuple x, \tuple y)$. Thus, $(M, X(\tuple x)) \models_s \alpha(R, \tuple x, \tuple y) \lor \beta(R, \tuple x, \tuple y)$ for all $s \in X$, as required.
\smallskip

Conversely, suppose that for any $s \in X$, $(M, X(\tuple x)) \models_s \alpha(R, \tuple x, \tuple y) \lor \beta(R, \tuple x, \tuple y)$. Now let  $H\in\CFMXn$ 
be the function such that $H(s) = X(\tuple x)$ for all $s \in X$. Note first that clearly 
$M\models_{X[H/\tuple u]} \tuple u \subseteq \tuple x$. Let $Y = \{h \in X[H/\tuple u] : (M,X(\tuple x)) \models_h \alpha(R, \tuple x, \tuple y)\}$ and 
$Z = \{h \in X[H/\tuple u] : (M,X(\tuple x)) \models_h \beta(R, \tuple x, \tuple y)\}$. By hypothesis,
$X[H/\tuple u]  = Y \cup Z$. 

If $Y \not = \emptyset$, then $Y(\tuple u) = X[H/\tuple u](\tuple u) = X(\tuple x)$: indeed, if $(M, X(\tuple x)) \models_h \alpha(R, \tuple x, \tuple y)$ then the same holds for all $h'$ which differ from $h$ only with respect to $\tuple u$, since $\tuple u$ is not free in $\alpha$. Therefore $(M, Y(\tuple u)) \models_h \alpha(R, \tuple x, \tuple y)$ for all $h \in Y$, and thus 
$M \models_Y \alpha^+(\tuple u, \tuple x \tuple y)$. 
If instead $Y = \emptyset$, then $M \models_Y \alpha^+(\tuple u, \tuple x \tuple y)$ trivially. 
Similarly, $M \models_Z \beta^+(\tuple u, \tuple x \tuple y)$, and therefore $M \models_{X[H/\tuple u]}  \alpha^+(\tuple u,\tuple x\tuple y)\lor\beta^+(\tuple u,\tuple x\tuple y)$, whence 
the function $H$ witnesses that $M \models_X \eta^+$.
\item If $\eta$ is $\alpha(R, \tuple x,\tuple y) \land \beta(R, \tuple x,\tuple y)$, let
$\eta^+(\tuple x,\tuple y)$ be $\alpha^{+}(\tuple x,\tuple y) \land \beta^+(\tuple x,\tuple y)$.
Then the claim follows directly from the induction hypothesis.

\item If $\eta(R, \tuple x,\tuple y)$ is $\exists v \,\alpha(R, \tuple x, \tuple y v)$, let 
$\eta^+(\tuple x,\tuple y)$ be $\exists v\, \alpha^+(\tuple x,\tuple y v)$; here
we assume w.l.o.g. that $v$ is not among the variables in $\tuple x\tuple y$.  
Then $M \models_X \eta^+(\tuple x,\tuple y)$ if and only if there is a function
 $H\in\CFMX$ such that
$M\models_{X[H/v]}\alpha^+(\tuple x,\tuple y v)$. Since $X[H/v](\tuple x)=X(\tuple x)$, by
induction hypothesis this is equivalent with the condition 
\begin{equation}\label{exeq}
	(M,X(\tuple x))\models_h \alpha(R,\tuple x,\tuple y v)\text{ holds for all $h\in X[H/v]$}.
\end{equation}
If condition (\ref{exeq}) is true, then clearly $(M,X(\tuple x))\models_s \eta(R, \tuple x, \tuple y)$
for all $s\in X$. Conversely, if $(M,X(\tuple x))\models_s \eta(R, \tuple x, \tuple y)$
holds for all $s\in X$, then (\ref{exeq}) is true for the function $H$ such that 
$H(s)=\{a\in\dom(M): (M,X(\tuple x))\models_{s[a/v]}\alpha(R,\tuple x,\tuple y v)\}$.

\item If $\eta(\tuple R, \tuple x,\tuple y)$ is $\forall v\, \alpha(\tuple R, \tuple x,\tuple y v)$, let 
$\eta^+(\tuple x,\tuple y)$ be 
$\forall v\, \alpha^+(\tuple x,\tuple y v)$. The proof of the claim is similar as in the previous case.
\end{enumerate}

(b) Let $\tuple z$ be an $n$-tuple of variables not occurring in $\eta$. 
We define $\tilde\eta(\tuple z)$ to be the formula 
$\exists\tuple x(\tuple z\subseteq\tuple x\land\eta^+(\tuple x))$, where
$\eta^+$ is the $\INCL$-formula corresponding to $\eta(R,\tuple x)$, as given
in claim (a).
Suppose first that $M\models_X\tilde\eta(\tuple z)$. Then there is 
a function $H\in\CFMXn$ such that
$M\models_Y\eta^+(\tuple x)$,
and $\tuple z\langle h\rangle\in Y(\tuple x)$ for all $h\in Y$, where $Y=X[H/\tuple x]$.
Thus, by claim~(a),
$(M,Y(\tuple x))\models_h \eta(R,\tuple x)$
holds for all $h\in Y$. It follows now from Lemma \ref{gfp-lemma} that 
$M\models_h [\gfp_{R,\tuple x}\,\eta(R,\tuple x)]\tuple x$
for all $h\in Y$. Since every $s\in X$ has an extension $h\in Y$, and 
$\tuple z\langle s\rangle=\tuple z\langle h\rangle\in Y(\tuple x)$, we conclude that 
$M\models_s [\gfp_{R,\tuple x}\,\eta(R,\tuple x)]\tuple z$
for all $s\in X$.

To prove the converse, assume that 
$M\models_s [\gfp_{R,\tuple x}\,\eta(R,\tuple x)]\tuple z$ for all $s\in X$.
Let $P$ be the greatest fixed point of the formula $\eta(R,\tuple x)$ 
(with respect to $R$ and $\tuple x$) on the model $M$, and let
$H\in\CFMXn$ be the function such that
$H(s)=P$ for every $s\in X$. Let $Y=X[H/\tuple x]$. Then 
$(M,Y(\tuple x))\models_h \eta(R,\tuple x)$ 
for all $h\in Y$, whence by claim (a), we have
$M\models_Y\eta^+(\tuple x)$. Moreover, 
$\tuple z\langle h\rangle\in Y(\tuple x)=P$ for all $h\in H$, whence
$M\models_Y \tuple z\subseteq \tuple x$. Thus, the function $H$ witnesses that
$M\models_X\exists\tuple x(\tuple z\subseteq\tuple x\land\eta^+(\tuple x))$.
\end{proof}

Note that in the case of disjunction above, it was necessary to ``store'' the possible 
values of $\tuple x$ into the values of a new tuple $\tuple u$ of variables: otherwise, 
by splitting the team $X$ into two subteams we could have lost information about $X(\tuple x)$.

The equivalence of $\INCL$ and $\PGFP$ for sentences follows now from the two theorems above: 

\begin{corollary}\label{incgfpeq}
For any $\INCL$-sentence $\phi$ there exists an equivalent $\PGFP$-sentence $\theta$, and vice versa.
\end{corollary}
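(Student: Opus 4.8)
The plan is to read off both directions of the equivalence directly from the two translation theorems just established, the only genuine work being to reconcile the fact that $\PGFP$ is interpreted over single assignments while $\INCL$ is interpreted over teams, and that Theorems \ref{inc-gfp} and \ref{gfp-inc} are stated for \emph{formulas} rather than sentences. Throughout I would use the convention that a sentence is true in $M$ precisely when it holds of the distinguished team $\{\emptyset\}$.

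For the direction from $\INCL$ to $\PGFP$, let $\phi$ be an $\INCL$-sentence, so that its free-variable tuple $\tuple x$ is empty. First I would apply Theorem \ref{inc-gfp} to obtain the $\PGFP$-formula $\phi^*(R,\tuple x)$; since $\tuple x$ is empty, $R$ is a nullary relation variable occurring only positively. Instantiating the equivalence of Theorem \ref{inc-gfp} at the team $X=\{\emptyset\}$ gives $M\models\phi \iff (M,X(\tuple x))\models_\emptyset \phi^*(R)$, and here $X(\tuple x)$ is the nullary relation containing the empty tuple, i.e. the value ``true''. It then remains to eliminate $R$: since $R$ is nullary, occurs only positively, and is to be read as true, I would replace every occurrence of $R$ by a fixed first-order tautology (e.g. $\exists w\,(w=w)$, using that models are nonempty) to obtain a genuine $\PGFP$-sentence $\theta$ with $M\models\theta \iff M\models\phi$. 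Positivity of the fixed-point operators is preserved under this substitution, so $\theta$ is indeed in $\PGFP$.

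For the converse, let $\theta$ be a $\PGFP$-sentence. By the normal form of Theorem \ref{gfp-nf} I may assume $\theta=\exists\tuple z\,[\gfp_{R,\tuple x}\,\psi(R,\tuple x)]\tuple z$ with $\psi$ first-order. Applying Theorem \ref{gfp-inc}(b) to $\eta:=\psi$ yields an $\INCL$-formula $\tilde\psi(\tuple z)$ with $M\models_X\tilde\psi(\tuple z)\iff M\models_s[\gfp_{R,\tuple x}\,\psi]\tuple z$ for every $s\in X$. I would then set $\theta^*:=\exists\tuple z\,\tilde\psi(\tuple z)$ and verify $M\models\theta^*\iff M\models\theta$. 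The key observation is that, under the lax semantics, existentially quantifying $\tuple z$ starting from the singleton team $\{\emptyset\}$ realizes precisely the nonempty teams $Y$ over $\tuple z$ (any nonempty relation $B\subseteq\dom(M)^n$ is attainable as $Y(\tuple z)$ by choosing the successive choice functions according to the fibres of $B$); hence $M\models\theta^*$ iff there is a nonempty $Y$ with $\tuple z\langle s\rangle$ in the greatest fixed point of $\Gamma_{M,\psi}$ for every $s\in Y$. Taking $Y$ to be a singleton shows this is equivalent to the greatest fixed point being nonempty, which is exactly the meaning of $M\models\exists\tuple z\,[\gfp_{R,\tuple x}\,\psi]\tuple z$ in Tarski semantics.

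I expect the main obstacle to be purely one of bookkeeping at the level of sentences: the two main theorems speak about the team-relation $X(\tuple x)$ and about satisfaction at every $s\in X$, so the work lies in checking that, at the distinguished team $\{\emptyset\}$ and after the leading quantifier block, these team-level statements collapse to the single-assignment statements that $\PGFP$ requires. In particular, the nullary-relation elimination in the first direction and the ``nonempty greatest fixed point'' analysis of the quantifier $\exists\tuple z$ in the second are the two points needing explicit verification; neither involves ideas beyond the translations already established.
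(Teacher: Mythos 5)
Your proposal is correct, and one half of it coincides with the paper's argument: for the direction from $\PGFP$ to $\INCL$ you do exactly what the paper does (Theorem \ref{gfp-nf}, then Theorem \ref{gfp-inc}(b), then the sentence $\exists\tuple z\,\tilde\eta(\tuple z)$); the paper states this step without further comment, whereas you correctly fill in the only nontrivial point, namely that under lax semantics the teams reachable from $\{\emptyset\}$ by the block $\exists\tuple z$ are precisely the nonempty teams over $\tuple z$, so that truth of the sentence amounts to nonemptiness of the greatest fixed point. Where you genuinely diverge is the direction from $\INCL$ to $\PGFP$. The paper applies Theorem \ref{inc-gfp} with a \emph{dummy unary} tuple $\tuple x=(x)$, so that $R$ is unary, and then eliminates the free relation variable $R$ not by substitution but by binding it with one more fixed point: $M\models\phi$ iff the full team $X$ over $\{x\}$ satisfies $\phi$, iff $\dom(M)\subseteq\Gamma_{M,\phi^*}(\dom(M))$, iff $M\models\forall x\,[\gfp_{R,x}\,\phi^*(R,x)]x$. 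You instead take $\tuple x$ empty, so that $R$ is nullary, evaluate the equivalence of Theorem \ref{inc-gfp} at $X=\{\emptyset\}$, where $X(\tuple x)$ is the nullary relation \emph{true}, and then substitute a first-order tautology for $R$; this substitution argument is semantically sound (positivity of $R$ is not even needed for it). The one caveat is that your route requires Theorem \ref{inc-gfp} -- and hence relation variables and $\gfp$ operators of arity $0$, which would appear throughout that theorem's inductive construction when $\tuple x$ is empty -- to make sense in the degenerate case $n=0$. The paper never explicitly admits $\ar(R)=0$, and fixed point logics are commonly set up with positive arities only; under that convention your invocation of Theorem \ref{inc-gfp} with empty $\tuple x$ is not licensed as stated, and the repair is precisely the paper's dummy-variable device. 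In exchange, when arity $0$ is allowed, your translation introduces no extra quantifier or fixed point operator, while the paper's stays safely within the arity-at-least-one fragment.
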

\begin{proof}
If $\phi$ is an $\INCL$-sentence, then by Theorem \ref{inc-gfp}, there is a formula
$\phi^*(R,x)$ such that for all models $M$ and teams $X$,
$M \models_X \phi$ if and only if  $(M,X(x)) \models_s \phi^*(R, x)$ for all $s \in X$. Thus,
$M\models\phi$ if and only if $M\models\forall x\, [\gfp_{R,x}\,\phi^*(R, x)]x$.

On the other hand, if $\psi$ is a $\PGFP$-sentence, then by Theorem \ref{gfp-nf},
we can assume that it is of the form $\exists \tuple z\, [\gfp_{R,\tuple x}\,\eta(R,\tuple x)]\tuple z$, 
where $\eta$ is a first-order formula. It follows now from
Theorem \ref{gfp-inc}(b) that $\psi$ is equivalent to the $\INCL$-sentence
$\exists\tuple z\,\tilde\eta(\tuple z)$.
\end{proof}

\begin{corollary}
A class of linearly ordered finite models is definable in $\INCL$ if and only if it can be 
recognized in $\PTIME$.
\end{corollary}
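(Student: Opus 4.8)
The plan is to obtain the statement as a straightforward concatenation of three equivalences, all of which are already at our disposal: the equivalence between $\INCL$ and $\PGFP$ on sentences (Corollary \ref{incgfpeq}), the equivalence between $\PGFP$ and $\LFP$ over finite models, and the Immerman--Vardi capture of $\PTIME$ by $\LFP$ over linearly ordered finite models.

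First I would observe that Corollary \ref{incgfpeq} associates to every $\INCL$-sentence $\phi$ a $\PGFP$-sentence $\theta$ true in exactly the same models, and conversely. Since this equivalence holds over \emph{all} models, it holds in particular over finite models, so the two logics define precisely the same classes of finite models. Consequently, a class $K$ of linearly ordered finite models is $\INCL$-definable if and only if it is $\PGFP$-definable.

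Next I would invoke the fact that, over finite models, $\PGFP$ is equivalent to $\LFP$: every $\PGFP$-sentence has an $\LFP$-sentence agreeing with it on all finite models, and vice versa. Restricting once more to the class of linearly ordered finite models, this yields that $K$ is $\PGFP$-definable if and only if it is $\LFP$-definable. Finally, the Immerman--Vardi theorem asserts that a class of linearly ordered finite models is $\LFP$-definable if and only if it can be recognized in $\PTIME$. Chaining the three biconditionals gives the corollary.

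Since every ingredient is an established theorem, there is essentially no obstacle to overcome; the only point deserving a word of care is that the equivalence of $\INCL$ and $\PGFP$ in Corollary \ref{incgfpeq} is stated for arbitrary models, so one must simply note that this \emph{a fortiori} delivers equivalence over the subclass of linearly ordered finite models. No genuinely new argument is required.
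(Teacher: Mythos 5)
Your proof is correct and follows exactly the route the paper intends: the paper states this corollary without proof precisely because it follows by chaining Corollary \ref{incgfpeq}, the equivalence of $\PGFP$ and $\LFP$ over finite models, and the Immerman--Vardi theorem, just as you do. Your added remark about restricting the equivalences to the subclass of linearly ordered finite models is a harmless (and correct) bit of extra care.
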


This connection between Inclusion Logic, Fixed Point Logic and descriptive complexity may be of 
great value for the further development of the area. In particular, it implies that fragments and 
extensions of $\INCL$ can be made to correspond to various fragments and extensions of 
$\PTIME$. Hence, results concerning their relationships may lead to insights which may be 
valuable in complexity theory, and vice versa.

\section{First-Order Union Closed Properties}
\label{sub:FOUC}

From Corollary \ref{incgfpeq} it follows immediately that Inclusion Logic is strictly weaker than 
$\Sigma_1^1$. As an immediate consequence, not all $\Sigma_1^1$-definable union-closed properties of relations can be expressed in Inclusion Logic. For example, consider the atom
\begin{description}
\item[TS-$\mathcal R$:] $M \models \mathcal R(xyzw)$ if and only if there exist two functions \[f, g: \dom(M) \rightarrow \dom(M)\] such that, for all $a, b \in \dom(M)$, \[(a, f(a), b, g(b)) \in X(xyzw).\]
\end{description}

It is easy to see that the atom $\mathcal R$ is union-closed.
On the other hand, it can be seen that that the sentence 
$\forall x \exists y \forall z \exists w (\mathcal R(xyzw) \wedge  (x = z \leftrightarrow y = w) \wedge (y=z \rightarrow x = w)\land x\not= y)$ holds in a finite model if and only if it contains an even number of elements.
Since even cardinality is not definable in $\GFP$, it follows that $\mathcal R$ is not definable in $\INCL$. 

But what about first order definable union-closed properties? As we will now see, all such properties are indeed definable in Inclusion Logic; and therefore, it is not possible to increase the expressive power of Inclusion Logic by adding any first order definable union-closed dependency. 
\begin{definition}
A sentence $\phi(R)$ is \emph{myopic} if it is of the form 
$
	\forall \tuple x (R \tuple x \rightarrow \theta(R, \tuple x))
$
for some first-order formula $\theta$ in which $R$ occurs only positively. 
\end{definition}

It follows at once from Theorem \ref{gfp-inc} that myopic sentences correspond to Inclusion Logic-definable properties:
\begin{proposition}
Let $\phi(R) = \forall \tuple x (R \tuple x \rightarrow \theta(R, \tuple x))$ be a myopic sentence. Then there exists an $\INCL$-formula $\phi^+(\tuple x)$ such that, for all models $M$ and teams $X$, 
\[
	M \models_X \phi^+(\tuple x) \mbox{ if and only if } (M,X(\tuple x)) \models \phi(R).
\]
\end{proposition}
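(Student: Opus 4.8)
The plan is to reduce the statement directly to Theorem \ref{gfp-inc}(a). The first observation is that the formula $\theta(R,\tuple x)$ appearing in the myopic sentence is first-order, has $R$ occurring only positively, and has free variables among $\tuple x$; thus it satisfies exactly the hypotheses of Theorem \ref{gfp-inc}(a), with the tuple $\tuple y$ taken to be empty and $\ar(R)=|\tuple x|=n$. Applying that theorem yields an $\INCL$-formula $\theta^+(\tuple x)$ with the property that $M\models_X\theta^+(\tuple x)$ if and only if $(M,X(\tuple x))\models_s\theta(R,\tuple x)$ for every $s\in X$. The witness for the proposition will then simply be $\phi^+:=\theta^+$, and it remains only to check that the condition appearing on the right coincides with $(M,X(\tuple x))\models\phi(R)$.

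To verify this I would unpack the Tarski-semantic meaning of the myopic sentence $\phi(R)=\forall\tuple x(R\tuple x\to\theta(R,\tuple x))$ over the model $(M,X(\tuple x))$, in which the symbol $R$ is interpreted by the relation $X(\tuple x)$. The universal quantifier ranges over all $n$-tuples $\tuple a\in\dom(M)^n$, but the guard $R\tuple x$ renders the implication vacuously true unless $\tuple a\in X(\tuple x)$. Since the elements of $X(\tuple x)$ are precisely the tuples $\tuple x\langle s\rangle$ for $s\in X$, and since $\theta$ depends only on the values assigned to $\tuple x$, the sentence $\phi(R)$ holds in $(M,X(\tuple x))$ exactly when $(M,X(\tuple x))\models_s\theta(R,\tuple x)$ for every $s\in X$. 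Combining this with the defining property of $\theta^+$ gives the desired equivalence.

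The structural feature being exploited is that the myopic shape $\forall\tuple x(R\tuple x\to\cdots)$ restricts the universal quantifier to the ``diagonal'' recorded by the team: the guard $R\tuple x$ selects precisely those tuples occurring in $X(\tuple x)$, which is what allows a single invocation of Theorem \ref{gfp-inc}(a) to do all the work. I do not expect any genuine obstacle here beyond this observation; the only point meriting a line of care is the boundary case of the empty team, where $X(\tuple x)=\emptyset$ forces the guard $R\tuple x$ to fail everywhere, making $\phi(R)$ vacuously true, and this matches the empty team property $M\models_\emptyset\theta^+$ on the team side.
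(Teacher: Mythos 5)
Your proposal is correct and is essentially identical to the paper's own proof: both apply Theorem \ref{gfp-inc}(a) to $\theta(R,\tuple x)$ with $\tuple y$ empty, take $\phi^+:=\theta^+$, and then observe that the guarded universal quantifier in the myopic sentence relativizes exactly to the tuples $\tuple x\langle s\rangle$ with $s\in X$. Your explicit unpacking of that last equivalence (and the empty-team boundary case) is a step the paper states without elaboration, but it is the same argument.
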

\begin{proof}
Consider $\theta(R, \tuple x)$: by  Theorem $\ref{gfp-inc}$, there exists an $\INCL$-formula $\theta^+(\tuple x)$ such that for all models $M$ and teams $X$,
\begin{align*}
	M \models_X \theta^+(\tuple x) &\iff \forall s \in X : (M, X(\tuple x)) \models_s\theta(R, \tuple x)\\ 
	 &\iff (M, X(\tuple x)) \models \forall\tuple x (R \tuple x \rightarrow \theta(R, \tuple x)),
\end{align*}
as required. 
\end{proof}

It is also easy to see that all myopic properties are union-closed. We will now prove the converse implication: if $\phi(R)$ is a first order sentence that defines a union-closed property of relations, then it is equivalent to some myopic sentence. From this preservation theorem it will follow at once that all union-closed first-order properties of relations are definable in Inclusion Logic.

First, let us recall some model-theoretic machinery:
\begin{definition}[$\omega$-big models]
A model $A$ of signature $\Sigma$ is $\omega$-big if for all finite tuples $\tuple a$ of elements of it and for all models $(B, \tuple b, S)$ such that $(A, \tuple a) \equiv (B, \tuple b)$ 
there exists a relation $P$ over $A$ such that $(A, \tuple a, P) \equiv (B, \tuple b, S)$. 
\end{definition}
\begin{definition}[$\omega$-saturated models]
A model $A$ is $\omega$-saturated if for every finite set $C$ of elements of $A$, all complete $1$-types over $C$ with respect to $A$ are realized in $A$.
\end{definition}
The proofs of the following model-theoretic results can be found in  \cite{hodges97b}.
\begin{theorem}[\cite{hodges97b}, Theorem 8.2.1]
Let $A$ be a model. Then $A$ has an $\omega$-big elementary extension.
\end{theorem}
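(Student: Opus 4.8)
The plan is to build the desired $\omega$-big elementary extension as the union of a sufficiently long elementary chain, realizing one ``expansion requirement'' at a time. I would first isolate a single-step lemma: given any model $C$, a finite tuple $\tuple a$ from $C$, and a model $(B,\tuple b,S)$ with $(C,\tuple a)\equiv(B,\tuple b)$, there exist an elementary extension $C'\succeq C$ and a relation $P$ on $C'$ with $(C',\tuple a,P)\equiv(B,\tuple b,S)$. This is a compactness argument: in the signature obtained from that of $C$ by adding a fresh symbol for $P$, a constant for every element of $C$, and identifying the constants of $\tuple a$ with $\tuple b$, I would show that the elementary diagram of $C$ together with the complete theory $\mathrm{Th}(B,\tuple b,S)$ is consistent. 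In a finite fragment the diagram part is one formula $\delta(\tuple a,\tuple c)$ true in $C$; since $(C,\tuple a)\equiv(B,\tuple b)$, the $\Sigma$-fact $\exists\tuple y\,\delta(\tuple a,\tuple y)$ transfers to $B$, and witnesses there let me interpret the extra constants so that $(B,\tuple b,S)$ itself models the fragment. A model of the full set, read as a $\Sigma$-structure with the named relation, is the required $(C',\tuple a,P)$.

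Next I would iterate this along an elementary chain $A=A_0\preceq A_1\preceq\cdots\preceq A_\alpha\preceq\cdots$ of length a regular cardinal $\kappa>|A|+2^{|\Sigma|+\omega}$, using downward L\"owenheim--Skolem to keep each $A_\alpha$ of size $<\kappa$, and dovetailing through all requirements $(\tuple a,\Theta)$, where $\tuple a$ is a finite tuple appearing at some stage and $\Theta$ is a complete theory in the signature ``$\Sigma$ plus one relation'' extending the $\Sigma$-type of $\tuple a$. There are at most $\kappa$ such requirements, so a standard bookkeeping enumeration addresses each one. Regularity of $\kappa$ guarantees that every finite tuple of $A_\kappa:=\bigcup_{\alpha<\kappa}A_\alpha$ already occurs in some $A_\alpha$, so that the pertinent requirement was in fact handled below $\kappa$.

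The main obstacle is persistence at limits: when a requirement is realized by a relation $P$ on some $A_{\beta+1}$, that $P$ lives only on an intermediate model, whereas $\omega$-bigness demands a relation on \emph{all} of $A_\kappa$ whose expansion is still elementarily equivalent to $(B,\tuple b,S)$. I would resolve this by carrying every previously chosen relation along the chain and extending it elementarily at each step: at stage $\beta$ I work in the signature enriched by all relations introduced so far, and realize the new requirement via the single-step lemma applied in this enriched signature. The subtlety is that the diagram now mentions the old relations while the new theory $\Theta$ mentions only the new symbol, so plain compactness no longer suffices; here I would invoke Robinson's joint consistency theorem, whose hypothesis is met precisely because $(A_\beta,\tuple a)\equiv(B,\tuple b)$ pins down the shared $\Sigma$-reduct on which the two theories must agree. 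Taking unions at limit stages then yields, for each realized requirement, an elementary chain of expansions in the one-relation signature (by Tarski--Vaught), so its union is a single relation on $A_\kappa$ witnessing $(A_\kappa,\tuple a,P)\equiv(B,\tuple b,S)$. Hence $A_\kappa$ is $\omega$-big and $A\preceq A_\kappa$, as required.

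I expect this persistence step to be the genuinely delicate point: everything else is bookkeeping and a routine diagram-plus-theory compactness argument, but keeping all previously realized expansions simultaneously alive along the chain — and seeing that a fresh requirement can always be accommodated on top of them — is exactly what separates mere elementary extensions from \emph{big} (resplendent) ones, and is where Robinson joint consistency does the essential work.
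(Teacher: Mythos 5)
This statement is not proved in the paper at all: it is quoted verbatim from Hodges, \emph{A Shorter Model Theory} (Theorem 8.2.1), so there is no internal proof to compare against. Your argument is correct and is essentially the standard proof given in that source: an elementary chain of regular length $\kappa > |A| + 2^{|\Sigma|+\omega}$ with downward L\"owenheim--Skolem size control and bookkeeping over requirements $(\tuple a,\Theta)$, where you correctly identify and solve the one genuinely delicate point --- persistence of realized expansions at later stages and limits --- by carrying all previously introduced relations along the chain, invoking Robinson joint consistency at each successor step (legitimate because $\mathrm{Th}(A_\beta,\tuple a)$ is a complete theory in the common signature $\Sigma$ plus constants for $\tuple a$), and taking unions of the resulting elementary chains of expansions; the only loose end, that requirements must range over extra relation symbols of every finite arity, is trivial bookkeeping.
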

\begin{theorem}[\cite{hodges97b}, Lemma 8.3.4]
Let $A$ and $B$ be $\omega$-saturated structures over a finite signature and such that, 
for all sentences $\chi(R)$ in which $R$ occurs only positively, 
\[
	A \models \chi(R)\; \Longrightarrow\; B \models \chi(R).
\]
Then there are elementary substructures $C$ and $D$ of $A$ and $B$ and a bijective homomorphism $f: C \rightarrow D$ which fixes all relation symbols except $R$.
\end{theorem}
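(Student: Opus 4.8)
The plan is to build the map $f$ by a back-and-forth construction driven by $\omega$-saturation, dovetailed with a Tarski--Vaught closure that guarantees elementarity. Call a formula \emph{positive} if $R$ occurs in it only positively, and for a tuple $\tuple a$ from $A$ write $\mathrm{tp}^+(\tuple a)$ for the set of positive formulas satisfied by $\tuple a$ in $A$ (similarly for $B$). The key observation is that a bijection $f$ with $\mathrm{tp}^+(\tuple a)\subseteq\mathrm{tp}^+(f\tuple a)$ for every tuple $\tuple a$ is exactly a bijective homomorphism fixing every symbol but $R$: for an atomic or negated-atomic formula not involving $R$ both it and its negation are positive, so $f$ preserves and reflects it (hence $f$ is an isomorphism for the reduct $\Sigma\setminus\{R\}$), whereas $R\tuple t$ is positive but $\neg R\tuple t$ is not, so $R$ is preserved only forwards. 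Thus it suffices to produce elementary substructures $C\preceq A$, $D\preceq B$ together with a positive-type--preserving bijection between them.

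I would maintain the invariant that the finite partial map $a_i\mapsto b_i$ built so far satisfies $A\models\varphi(\tuple a)\Rightarrow B\models\varphi(\tuple b)$ for every positive $\varphi$. The base case (the empty tuple) is precisely the hypothesis of the theorem. For the \textbf{forth step}, given a new $a\in A$ I form the type
\[
\Phi(y)=\{\varphi(\tuple b,y):\varphi\text{ positive},\ A\models\varphi(\tuple a,a)\}
\]
over the finite parameter set $\tuple b$ in $B$. For any finitely many such $\varphi_1,\dots,\varphi_n$, the formula $\exists y\bigwedge_i\varphi_i(\tuple x,y)$ is again positive (since $\wedge$ and $\exists$ preserve positivity) and holds of $\tuple a$ in $A$; the invariant transfers it to $\tuple b$ in $B$, so $\Phi$ is finitely satisfiable. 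By $\omega$-saturation of $B$ it is realized by some $b$, and adjoining $a\mapsto b$ preserves the invariant.

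The delicate step, and the main obstacle, is the \textbf{back step}, where the one-directional hypothesis must nonetheless be inverted. Given $b\in B$, I would realize in $A$ the type
\[
\Psi(y)=\{\neg\varphi(\tuple a,y):\varphi\text{ positive},\ B\not\models\varphi(\tuple b,b)\}.
\]
To see $\Psi$ is finitely satisfiable, suppose toward a contradiction that $\neg\varphi_1,\dots,\neg\varphi_n$ cannot be satisfied simultaneously, i.e.\ $A\models\forall y\bigvee_i\varphi_i(\tuple a,y)$. Crucially $\forall y\bigvee_i\varphi_i(\tuple x,y)$ is positive (now it is $\vee$ and $\forall$ that preserve positivity), so by the invariant $B\models\forall y\bigvee_i\varphi_i(\tuple b,y)$; instantiating $y:=b$ gives $B\models\varphi_j(\tuple b,b)$ for some $j$, contradicting the defining condition of $\Psi$. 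Hence $\Psi$ is finitely satisfiable and, by $\omega$-saturation of $A$, realized by some $a$; adjoining $a\mapsto b$ again preserves the invariant. It is exactly this symmetry between $(\exists,\wedge)$ and $(\forall,\vee)$ under the syntactic notion of positivity that lets an asymmetric hypothesis yield a genuine back-and-forth.

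Finally I would dovetail the forth and back steps with a Tarski--Vaught closure: at each stage I also adjoin, for the finitely many $\Sigma$-formulas over the current finite domain, a witness in $A$ for each existential $\Sigma$-formula true in $A$ (routing the new element through a forth step to fix its image) and dually a witness in $B$ (routing it through a back step to fix its preimage). Because the signature is finite there are only countably many formulas, so after $\omega$ stages the accumulated sets $C=\{a_i\}$ and $D=\{b_i\}$ are countable, pass the Tarski--Vaught test, and are therefore elementary substructures $C\preceq A$ and $D\preceq B$. The union $f$ of the partial maps is total on $C$, surjective onto $D$, and injective (as $\neg(x=y)$ is positive), and it preserves all positive formulas; by the opening observation, together with $C\preceq A$ and $D\preceq B$ (which give $R^C=R^A\!\restriction C$ and $R^D=R^B\!\restriction D$), it is a bijective homomorphism fixing every relation symbol except $R$, as required.
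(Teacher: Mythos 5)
The paper does not actually prove this statement---it is imported verbatim from Hodges (Lemma 8.3.4) with a citation---and your argument is correct and is essentially the standard proof of that lemma: a back-and-forth construction driven by $\omega$-saturation, in which the invariant (forward preservation of $R$-positive formulas) is maintained in the forth step via $(\exists,\wedge)$-positivity and, crucially, inverted in the back step via $(\forall,\vee)$-positivity, dovetailed with a Tarski--Vaught closure to secure $C\preceq A$ and $D\preceq B$. The only blemish is the phrase ``the finitely many $\Sigma$-formulas over the current finite domain'': there are infinitely many such formulas even over a finite parameter set, so the closure requirements must be enumerated and dovetailed across the $\omega$ stages (as your countability remark indicates you intend), but this is routine bookkeeping rather than a gap.
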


\begin{theorem}[Essentially \cite{hodges97b}, Theorem 8.1.2]
Suppose that $A$ is $\omega$-big and $\tuple a$ is a finite tuple of elements. Then $(A, \tuple a)$ is $\omega$-saturated. 
\end{theorem}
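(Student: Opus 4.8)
The plan is to unwind both definitions and exploit the fact that a single element of the model can be coded as a unary relation, so that the realization of a type can be extracted from one instance of $\omega$-bigness. Fix a finite set $C=\{c_1,\dots,c_m\}$ of elements of $A$ and a complete $1$-type $p(x)$ over $C$ with respect to $(A,\tuple a)$; I must find an element of $A$ realizing $p$. Since $\omega$-bigness of $A$ is asserted for \emph{every} finite tuple, I would first pass from $\tuple a$ to the extended tuple $\tuple a'=\tuple a c_1\cdots c_m$. Under this move $p(x)$ becomes a complete $1$-type in the signature of $(A,\tuple a')$ with no parameters beyond the named constants, so it is enough to realize $p$ in $(A,\tuple a')$, and $\omega$-bigness still applies to $\tuple a'$.

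Next I would produce a model realizing $p$. Because $p$ is consistent with $\mathrm{Th}(A,\tuple a')$, a routine compactness argument (adjoin a fresh constant $c$, assert $\varphi(c)$ for every $\varphi\in p$, and invoke compactness) yields a model $(B,\tuple b)\equiv(A,\tuple a')$ together with an element $e\in B$ realizing $p$. I then expand $B$ by the \emph{unary} relation $S=\{e\}$. Since $(A,\tuple a')\equiv(B,\tuple b)$, the definition of $\omega$-bigness applies to the structure $(B,\tuple b,S)$ and hands me a (unary) relation $P$ over $A$ with $(A,\tuple a',P)\equiv(B,\tuple b,S)$.

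Finally I would transfer the relevant properties along this elementary equivalence. The sentence asserting that $S$ contains exactly one element is first-order, so $P=\{d\}$ is a singleton for some $d\in A$; and for each $\varphi(x)\in p$ the sentence $\exists x\,(Sx\wedge\varphi(x))$ holds in $(B,\tuple b,S)$, hence $\exists x\,(Px\wedge\varphi(x))$ holds in $(A,\tuple a',P)$, which forces $(A,\tuple a')\models\varphi(d)$. Thus $d$ realizes $p$, as required. The only delicate point is the bookkeeping in the reduction step: I must make sure the parameters $C$ are legitimately absorbed into the distinguished tuple so that $\omega$-bigness can be invoked, and that the members of $p$ genuinely become sentences (with $x$ bound) in the expanded signature. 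Everything after that is a direct application of $\equiv$ to individual first-order sentences, so I expect no further obstruction.
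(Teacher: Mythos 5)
Your proof is correct. The paper gives no proof of this statement---it is quoted from Hodges' \emph{A Shorter Model Theory} (Theorem 8.1.2)---and your argument (realize the type in an elementarily equivalent structure via compactness, code the witness as a singleton unary relation $S$, and pull it back to $A$ by $\omega$-bigness after absorbing the parameters $C$ into the distinguished tuple) is exactly the standard argument from that source; the only available simplification is that you need not verify $P$ is a singleton, since elementary equivalence already transfers $\exists x\,Px$ and $\forall x(Px \to \varphi(x))$ for each $\varphi \in p$, so every element of $P$ realizes $p$.
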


Using these results, we can prove our representation theorem: 

\begin{theorem}\label{FOunion}
Let $\phi(R)$ be a first order sentence that defines a union-closed property of $R$. 
Then $\phi$ is equivalent to some myopic sentence. Consequently,
every first-order definable union-closed property of relations is definable in $\INCL$.
\end{theorem}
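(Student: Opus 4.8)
The plan is to prove the stated preservation property — union-closed first order $\Rightarrow$ myopic — by a saturation argument, and then obtain definability in $\INCL$ from the preceding Proposition. Fix the finite signature consisting of $R$ and the symbols occurring in $\phi$, and for a model $M$ write $\mathcal F_M=\{P:(M,P)\models\phi\}$. The combinatorial heart of union-closure is the tuple-by-tuple criterion: a family of relations is closed under arbitrary unions (including the empty one, i.e. $(M,\emptyset)\models\phi$, as the empty-team convention provides) exactly when $P\in\mathcal F_M$ iff for every $\tuple b\in P$ there is $S\in\mathcal F_M$ with $\tuple b\in S\subseteq P$. Let $T$ be the set of all myopic sentences entailed by $\phi$. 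A conjunction of myopic sentences is again myopic and $\phi\models T$ trivially, so by compactness it suffices to show $T\models\phi$; a finite subconjunction of $T$ then defines $\phi$.

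So fix $B\models T$; I must show $(B,R^{B})\models\phi$. Since $T$ and $\phi$ are first order I may pass to an elementary extension and assume the $R$-free reduct of $B$ is $\omega$-big, whence $(B,\tuple b)$ is $\omega$-saturated for every finite $\tuple b$ (Theorem 8.1.2). By the tuple-by-tuple criterion it is enough to produce, for each $\tuple b\in R^{B}$, a relation $S$ with $\tuple b\in S\subseteq R^{B}$ and $(B,S)\models\phi$. To make the inclusion ``$S\subseteq R^{B}$'' first-order I introduce a fresh $n$-ary symbol $U$, let $\mathcal N$ be the $R$-free reduct of $B$ expanded by $U:=R^{B}$, and consider, with constants $\tuple c$ naming a prospective member $\tuple b$ of $R$, the theory
\[
 \Xi_{\tuple b}=\{\phi,\ R\tuple c,\ \forall\tuple x(R\tuple x\to U\tuple x)\}\cup\{\neg\theta(\tuple c):\theta\text{ positive in }R\text{ and }U,\ (\mathcal N,R:=R^{B})\not\models\theta[\tuple b]\}.
\]

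The argument splits on the consistency of $\Xi_{\tuple b}$. If it is inconsistent, compactness yields a single $\theta$, positive in both $R$ and $U$, with $\phi,R\tuple c,\forall\tuple x(R\tuple x\to U\tuple x)\models\theta(\tuple c)$, while $\theta[\tuple b]$ fails once $R$ and $U$ are both read as $R^{B}$. Reading $U$ as $R$ — legitimate because every model of $\phi$ may take $U:=R$, and polarity-safe precisely because $\theta$ is $U$-positive — turns this into $\phi\models\forall\tuple x(R\tuple x\to\theta')$ with $\theta'$ positive in $R$; this is a myopic sentence, and $\tuple b$ witnesses its failure in $B$, contradicting $B\models T$. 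Hence $\Xi_{\tuple b}$ is consistent; let $A$ be an $\omega$-saturated model of it, and note $(\mathcal N,R:=R^{B})$ is $\omega$-saturated as well, $R$ being definable there from $U$. Then the tuple named by $\tuple c$ in $A$ has positive-in-$(R,U)$ type included in that of $\tuple b$ in $(\mathcal N,R:=R^{B})$. The two-relation form of Lemma 8.3.4 of \cite{hodges97b} (allowing both $R$ and $U$ to grow, e.g. by coding the pair as one relation) then gives $C\preceq A$, $D\preceq(\mathcal N,R:=R^{B})$ and a map fixing $\Sigma$ and enlarging $R,U$; transporting $\phi$ across its $\Sigma$-isomorphism produces $S_0\subseteq R^{B}$ with $\tuple b\in S_0$ and $(D,S_0)\models\phi$. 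Because $D$ is an elementary substructure of $\mathcal N$ in the signature naming $R^{B}$ by $U$, $\omega$-bigness of $\mathcal N$ (Theorem 8.2.1) realizes over all of $B$ a relation $S$ with $(\mathcal N,\tuple b,S)\equiv(D,\tuple b,S_0)$; the first-order conditions $\phi$, $R\tuple c$ and $R\subseteq U$ all transfer, giving $(B,S)\models\phi$, $\tuple b\in S$ and $S\subseteq R^{B}$, as required. Union-closure now yields $(B,R^{B})\models\phi$.

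The delicate point is this last step: Lemma 8.3.4 returns a model of $\phi$ only over an elementary substructure $D$, and ``$S_0\subseteq R^{B}$'' is not first-order unless $R^{B}$ is named, so I carry $R^{B}$ as the auxiliary relation $U$ and use $\omega$-bigness to lift $S_0$ to a relation over the whole of $B$ while preserving the inclusion. The resulting tension is the main obstacle: running the homomorphism lemma wants control of the full positive-in-$R$ type, whereas keeping the sentence extracted in the inconsistent case genuinely myopic (positive in $R$) forces me to restrict to formulas positive in $U$ as well — which is exactly why $U$, like $R$, must be allowed to grow in the homomorphism lemma. Balancing these two requirements, so that the type is rich enough for the lemma yet poor enough to stay $U$-positive, is the crux of the proof.
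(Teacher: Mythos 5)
Your proof is correct and follows the same skeleton as the paper's: the theory $T$ of myopic consequences of $\phi$ and the compactness reduction to $T\models\phi$; a consistency-of-a-type argument (your $\Xi_{\tuple b}$ plays the role of the paper's $\Gamma$) whose failure would, by compactness, yield a myopic consequence of $\phi$ that is false in $B$; then $\omega$-saturation, the homomorphism lemma (Lemma 8.3.4 of \cite{hodges97b}), $\omega$-bigness to lift the subset from an elementary substructure to a relation over all of $B$; and finally the tuple-by-tuple union-closure argument. The one place you genuinely diverge is the auxiliary predicate $U$ naming $R^B$, which forces you both to restrict $\Xi_{\tuple b}$ to formulas positive in $R$ \emph{and} $U$, and to invoke a two-relation form of Lemma 8.3.4 in which both $R$ and $U$ may grow --- a statement going beyond the lemma as quoted, though your sketched justification (coding the pair into one relation, or rerunning the positive back-and-forth with a set of growing predicates) is routine. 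This detour is avoidable: your worry that ``$S_0\subseteq R^B$ is not first-order unless $R^B$ is named'' overlooks that $R^B$ is already named --- by $R$ itself. In the paper's version the single symbol $R$ does double duty: in the saturated model of $\Gamma$ it denotes the prospective subset, while in $B$ it denotes $R^B$, and the conclusion of the one-relation lemma --- a bijective homomorphism $f\colon C\to D$ fixing everything except $R$ --- delivers exactly $S:=f(R^C)\subseteq R^D$. Since $D\preceq B$ retains the symbol $R$, the inclusion is the first-order sentence $\forall\tuple x(S\tuple x\to R\tuple x)$ of the expansion $(D,\tuple b,S)$, and bigness of $B$ transfers it together with $\phi(S)\wedge S\tuple b$. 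So the ``crux'' tension you describe (a type rich enough for the lemma yet poor enough to stay $U$-positive) is self-imposed; dropping $U$ makes $\Xi_{\tuple b}$ literally the paper's $\Gamma$ and lets the quoted one-relation lemma do all the work.
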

\begin{proof}
Let $T = \{\phi'(R) : \phi'(R) \mbox{ is myopic}, \phi(R) \models \phi'(R)\}$. If we can show that $T \models \phi(R)$, we are done: indeed, by compactness this implies that $\phi$ is equivalent to a finite conjunction  
$
	\forall \tuple x (R \tuple x \rightarrow \theta_1(R, \tuple x)) \wedge  \ldots \wedge 
	\forall \tuple x(R \tuple x \rightarrow \theta_n(R, \tuple x))
$
of myopic sentences,
which of course is equivalent to 
$
	\forall \tuple x (R \tuple x \rightarrow (\theta_1(R, \tuple x) \wedge \ldots \wedge \theta_n(R, \tuple x)))
$.

So, let $B'$ be a model satisfying $T$, and let $B$ be an $\omega$-big extension of $B'$. We need to show that $B \models \phi(R)$ (and, therefore, $B' \models \phi(R)$).

Now choose an arbitrary tuple $\tuple b$ of elements such that $B \models R \tuple b$, and let $\Gamma$ be the theory 
\[
	\Gamma = \{R \tuple a, \phi(R)\} \cup \{\psi(R, \tuple a) : R \mbox{ only negative in }
	 \psi, B \models \psi(R, \tuple b)\}.
\]
$\Gamma$ is satisfiable: indeed, if it were not then by compactness there would be formulas $\psi_1(R, \tuple x), \ldots ,\psi_n(R, \tuple x)$ in which $R$ occurs only negatively such that 
\[
	\phi(R) \models \forall \tuple x\Big(R \tuple x \rightarrow \bigvee_{1\le i\le n} \lnot \psi_i(R, \tuple x)\Big).
\]
But this is a myopic formula, and therefore it would have to hold in $B$, which is a contradiction since $B \models \psi_i(R, \tuple b)$ for all $1\le i\le n$.

Now let $(A, \tuple a)$ be an $\omega$-saturated model of $\Gamma$. If $R$ occurs only positively in $\chi(R, \tuple x)$ and $A \models \chi(R, \tuple a)$, then $B \models \chi(R, \tuple b)$; otherwise $\lnot \chi(R, \tuple a)$ would be in $\Gamma$.
Furthermore, since $B$ is $\omega$-big,  $(B, \tuple b)$ is $\omega$-saturated.
Thus, there are elementary substructures $(C, \tuple a)$ and $(D, \tuple b)$ of $(A, \tuple a)$ and $(B, \tuple b)$ and a bijective homomorphism $f: C \rightarrow D$ that fixes all relations except $R$. 

Let $S = f(R^{C})$. Then $S \subseteq R^{D}$, since $f$ is an homomorphism; and $f$ is actually an isomorphism between $(C, \tuple a)$ and $(D[S/R], \tuple b)$, since $f$ fixes even $R$ between these two models.
Now, $C \models R \tuple a \wedge \phi(R)$, whence $D \models S \tuple b \wedge \phi(S)$. Furthermore, since $S \subseteq R$ we have that  $D \models \forall \tuple x (S \tuple x \rightarrow R \tuple x)$. 

Now, $(D, \tuple b)$ is an elementary substructure of $(B, \tuple b)$ and $B$ is a $\omega$-big model: therefore, there exists a relation $P$ over $B$ such that 
$
	(D, \tuple b, S) \equiv (B, \tuple b, P)
$.
In particular, this implies that $B \models P \tuple b \wedge \phi(P) \wedge P \subseteq R$: there is a subset of $R^B$ which contains $\tuple b$ and satisfies $\phi$.

But we chose $\tuple b$ as an arbitrary tuple in $R^B$. So we have that $R^B$ is the union of a family of relations $P_{\tuple b}$, where $\tuple b$ ranges over $R^{B}$; and $B \models \phi(P_{\tuple b})$ for all such $\tuple b$. Since $\phi(R)$ is closed under unions, this implies that $B \models \phi(R)$, as required. 
\end{proof}


\section{An EF Game for Inclusion Logic}
\label{sub:EF}
We will now define an Ehrenfeucht-Fra\"iss\'e game for Inclusion Logic.
This game is an obvious variant of the one defined in \cite{vaananen07} for Dependence Logic: 
\begin{definition}
Let $A$ and $B$ be two models over the same signature, let $n \in \mathbb N$, and let $X$ and $Y$ be two teams with the same domain over $A$ and $B$, respectively. Then the two-player game $G_n(A, X, B, Y)$ is defined as follows: 
\begin{enumerate}
\item The initial position $p_0$ is $(X, Y)$; 
\item For each $i \in \{1, \ldots, n\}$, let $p_{i-1}$ be $(X_{i-1}, Y_{i-1})$. Then Spoiler makes a move of one of the following types: 
\begin{description}
\item[Splitting:] Spoiler chooses two teams $X'$, $X''$ such that $X_{i-1} = X' \cup X''$. Then Duplicator chooses two teams $Y'$, $Y''$ such that $Y_{i-1} = Y' \cup Y''$. Then Spoiler chooses whether the next position $p_i$ is $(X', Y')$ or $(X'', Y'')$.
\item[Supplementing:] Spoiler chooses a variable $v$ and a function $H: X_{i-1} \rightarrow \parts(\dom(A)) \backslash \{\emptyset\}$. Then Duplicator chooses a function $K : Y_{i-1} \rightarrow \parts(\dom(B)) \backslash \{\emptyset\}$, and the new position $p_i$ is\\$(X_{i-1}[H/v], Y_{i-1}[K/v])$.
\item[Duplication:] Spoiler chooses a variable $v$. The next position $p_i$ is $(X_{i-1}[A/v], Y_{i-1}[B/v])$.
\end{description}
\item The final position $p_n = (X_n, Y_n)$ is \emph{winning for Spoiler} if and only if there exists a 
formula $\alpha$ which is either a first-order literal, or an inclusion atom, such that $A \models_{X_n} \alpha$, but $B \not \models_{Y_n} \alpha$. Otherwise, the final position is winning for Duplicator.
\end{enumerate}

\end{definition}
The rank of an Inclusion Logic formula is also defined much in the same way as the rank of a Dependence Logic formula:
\begin{definition}
Let $\phi$ be an $\INCL$-formula. Then we define its \emph{rank} $\rank(\phi) \in \mathbb N$ by structural induction on $\phi$, as follows:
\begin{enumerate}
\item If $\phi$ is a first-order literal or an inclusion atom, $\rank(\phi) = 0$; 
\item $\rank(\psi \wedge \theta) = \max(\rank(\psi), \rank(\theta))$; 
\item $\rank(\psi \vee \theta) = \max(\rank(\psi), \rank(\theta)) + 1$; 
\item $\rank(\exists v \psi) = \rank(\forall v \psi) = \rank(\psi) + 1$.
\end{enumerate}
\end{definition}

The next  theorem 
shows that our games behave as required with respect to our notion of rank. Its proof is practically the same as for the EF game for $\DL$ in \cite{vaananen07}.

\begin{theorem}\label{efgthm}
Let $A$ and $B$ be models and $X$ and $Y$ teams on $A$ and $B$. Then 
Duplicator has a winning strategy in $G_n(A, X, B, Y)$ if and only if 
\[
	A \models_X \phi\; \Longrightarrow\; B \models_Y \phi
\]
holds for all $\INCL$-formulas $\phi$ with $\rank(\phi) \leq n$.
\end{theorem}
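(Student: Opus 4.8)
The plan is to prove the equivalence by induction on $n$, establishing both directions at once. I work over a finite relational vocabulary, so that at every stage the active variables form a finite set and, as explained at the end, there are only finitely many $\INCL$-formulas of rank at most $n$ up to logical equivalence. It is convenient to call a position $(X',Y')$ \emph{$m$-good} if $A\models_{X'}\psi \Rightarrow B\models_{Y'}\psi$ for every formula $\psi$ with $\rank(\psi)\le m$; the theorem then asserts precisely that Duplicator wins $G_n(A,X,B,Y)$ if and only if $(X,Y)$ is $n$-good. For the base case $n=0$ there are no moves, the final position is $(X,Y)$, and by definition Duplicator wins exactly when there is no first-order literal or inclusion atom $\alpha$ with $A\models_X\alpha$ but $B\not\models_Y\alpha$; since every rank-$0$ formula is a conjunction of such atoms, this is exactly $0$-goodness.

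For the inductive step, consider first the direction from goodness to a winning strategy. Assuming $(X,Y)$ is $(n+1)$-good, I describe Duplicator's reply to Spoiler's first move using \emph{characteristic formulas}: for a team $Z$ over a finite set of variables, let $\tau_Z^{n}$ be the conjunction of all rank-$\le n$ formulas satisfied by $Z$ (a genuine formula, by the finiteness remark), so that $A\models_Z\tau_Z^n$, and $B\models_{Z'}\tau_Z^n$ holds exactly when $(Z,Z')$ is $n$-good. If Spoiler plays a \textbf{splitting} move $X=X'\cup X''$, then $A\models_X \tau^n_{X'}\vee\tau^n_{X''}$, a formula of rank $\le n+1$; by $(n+1)$-goodness $B\models_Y \tau^n_{X'}\vee\tau^n_{X''}$, which yields a split $Y=Y'\cup Y''$ with both $(X',Y')$ and $(X'',Y'')$ $n$-good, and this is Duplicator's reply. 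If Spoiler plays a \textbf{supplementing} move with variable $v$ and function $H$, then $A\models_X\exists v\,\tau^n_{X[H/v]}$, so $(n+1)$-goodness provides a choice function $K$ with $(X[H/v],Y[K/v])$ $n$-good. A \textbf{duplication} move on $v$ is treated identically via $\forall v\,\tau^n_{X[A/v]}$. In every case the new position is $n$-good, so by the induction hypothesis Duplicator wins the remaining $G_n$.

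For the converse direction, assume Duplicator wins $G_{n+1}(A,X,B,Y)$ and prove $(n+1)$-goodness by a secondary induction on the structure of a formula $\phi$ with $\rank(\phi)\le n+1$ and $A\models_X\phi$. Conjunction is immediate from the structural hypothesis. For $\phi=\psi\vee\theta$ (so $\rank(\psi),\rank(\theta)\le n$), let Spoiler play the split $X=X'\cup X''$ witnessing $A\models_X\phi$; since Duplicator's reply $Y=Y'\cup Y''$ must win the ensuing $G_n$ after \emph{either} of Spoiler's two choices, the primary induction hypothesis gives $B\models_{Y'}\psi$ and $B\models_{Y''}\theta$, hence $B\models_Y\phi$. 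The cases $\phi=\exists v\,\psi$ and $\phi=\forall v\,\psi$ are analogous, using a supplementing and a duplication move respectively. For the base case $\phi=\alpha$ an atom, I use that an atomic violation persists under duplication by a \emph{fresh} variable: if $A\models_X\alpha$ but $B\not\models_Y\alpha$, then the same holds of $(X[A/v],Y[B/v])$ whenever $v$ does not occur in $\alpha$ (both first-order literals and inclusion atoms are insensitive to $v$). Thus Spoiler could force a violating final position by making $n+1$ such fresh duplication moves, contradicting Duplicator's win; hence $A\models_X\alpha\Rightarrow B\models_Y\alpha$.

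Setting aside the routine game bookkeeping, the real content—and the step needing the most care—is the existence of the characteristic formulas $\tau^n_Z$, that is, the lemma that over a finite relational vocabulary there are, up to equivalence, only finitely many $\INCL$-formulas of rank at most $n$ with free variables in a fixed finite set. This is established by induction on $n$: each disjunction or quantifier step raises the rank by one and introduces at most one new variable, so the variable pool stays finite and the number of non-equivalent formulas remains finite, exactly as for ordinary first-order logic. Once this finiteness is in hand, the whole argument runs parallel to the EF game for $\DL$ in \cite{vaananen07}.
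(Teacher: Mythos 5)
Your proof follows essentially the same route as the paper's: the paper gives no argument of its own but appeals to the EF theorem for $\DL$ in V\"a\"an\"anen's book, and your double induction --- characteristic formulas $\tau^n_Z$ for the direction from preservation to a winning strategy, and structural induction together with the fresh-variable duplication trick for atoms in the converse direction --- is precisely that standard argument, correctly adapted to the splitting, supplementing and duplication moves. The game-theoretic steps are all sound, and your explicit finite-vocabulary hypothesis is in fact needed for the characteristic-formula direction (the paper's statement omits it, but some such assumption is unavoidable there, exactly as for the classical first-order game).

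There is, however, a genuine gap at the very point you yourself identify as the crux, namely the finiteness lemma. Your justification --- that the number of non-equivalent formulas of rank $\le n$ stays finite ``exactly as for ordinary first-order logic'' --- fails at the atomic level, which is the one place where $\INCL$ genuinely differs from FO in this respect. Over a finite relational vocabulary and a finite variable set $V$ there are only finitely many first-order atomic formulas, but there are \emph{infinitely many inclusion atoms}, since $\tuple t_1 \subseteq \tuple t_2$ is permitted for tuples of variables of arbitrary equal length: $x\subseteq y$, $(x,y)\subseteq (y,x)$, $(x,y,x)\subseteq (y,x,y)$, and so on. Hence already the rank-$0$ case of your counting induction does not go through as stated, and without the finiteness conclusion the conjunction $\tau^n_Z$ need not be a formula at all, so the entire direction from $(n{+}1)$-goodness to a winning strategy collapses. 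The lemma is nevertheless true, but it requires an additional reduction of inclusion atoms up to logical equivalence. For instance: if $\tuple t_2=(w_1,\dots,w_k)$ contains a repetition $w_i=w_j$ with $i\neq j$, then, writing $\tuple t_1=(u_1,\dots,u_k)$, the atom $\tuple t_1\subseteq\tuple t_2$ is logically equivalent to $u_i=u_j \,\wedge\, \tuple t_1'\subseteq\tuple t_2'$, where the primed tuples omit the $j$-th component (any witness $s'$ for $s$ forces $u_i\langle s\rangle=w_i\langle s'\rangle=w_j\langle s'\rangle=u_j\langle s\rangle$, and conversely the omitted coordinate is recovered from $u_i=u_j$). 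Iterating, every inclusion atom with variables in $V$ is equivalent to a conjunction of equalities and an inclusion atom whose right-hand tuple has pairwise distinct variables, hence of length at most $|V|$; of these there are only finitely many. Once this short lemma is inserted, your argument is complete.
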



Due to the equivalence between $\INCL$ and $\PGFP$ we can conclude 
at once that the EF game for Inclusion Logic is also a novel EF game for $\PGFP$, rather different 
in structure from the one introduced in \cite{bosse93}. It may be hoped that this new game 
and its variants could be of some use for studying the expressive power of fixed point logics.

Although the EF game for Inclusion Logic has a clear second order flavour, it is still manageable:
we will next show that Duplicator has a concrete winning strategy, when the models are simple enough.

\begin{proposition}\label{efex}
Let $A = \{1, \ldots, n\}$ and $B = \{1, \ldots, n+1\}$ be two finite models over the empty signature. Then for all $\INCL$-sentences $\phi$ of rank $\leq n$, 
\[
	A \models \phi \;\Longrightarrow\; B \models \phi.
\]
\end{proposition}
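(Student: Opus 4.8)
The plan is to invoke Theorem \ref{efgthm}. Since $\phi$ is a sentence, $A\models\phi$ and $B\models\phi$ mean $A\models_{\{\emptyset\}}\phi$ and $B\models_{\{\emptyset\}}\phi$, so it suffices to exhibit a winning strategy for Duplicator in the game $G_n(A,\{\emptyset\},B,\{\emptyset\})$. The crucial bookkeeping observation is that the game lasts only $n$ rounds and that each round enlarges the common domain of the two teams by at most one variable; hence at every position the teams have at most $n$ variables, while $|B|=n+1$. Thus every single assignment uses at most $n$ of the $n+1$ elements of $B$, so it leaves at least one element of $B$ unused, and — once the extra element has been brought into play — at least one element of $A$ unused as well. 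This slack is exactly the resource that will let Duplicator absorb the size mismatch.

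Throughout the play Duplicator maintains a surjection $\rho\colon B\to A$ whose only non-injective fibre is a two-element set $\{a_0,\ast\}$, together with the invariant that the team $Y_i$ on $B$ is \emph{exactly} the largest team that projects onto $X_i$ through $\rho$ and respects the equalities that $X_i$ forces; concretely, $t\in Y_i$ iff $\rho\circ t\in X_i$ and $t$ is constant on every block of variables that $X_i$ makes everywhere equal. I would first check that this invariant already makes the final position winning for Duplicator. Inequalities transfer for free: if $A\models_{X_n}x\neq y$ then, for $t\in Y_n$, we have $\rho(t(x))\neq\rho(t(y))$ because $\rho\circ t\in X_n$, whence $t(x)\neq t(y)$. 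Forced equalities transfer because $Y_n$ respects them by construction. For an inclusion atom $\tuple x\subseteq\tuple y$ true in $X_n$, one transfers each value $\tuple x\langle t\rangle$ by first finding $s'\in X_n$ with $\tuple y\langle s'\rangle=\rho(\tuple x\langle t\rangle)$ and then lifting $s'$ to a $t'\in Y_n$ that matches $\tuple x\langle t\rangle$ on the $\tuple y$-coordinates and is completed freely elsewhere; the forced-equality discipline of $Y_n$ guarantees that this lift is well defined and again lies in $Y_n$.

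It then remains to keep the invariant alive under the three move types. Duplication and supplementing I would handle by mirroring Spoiler's choice through $\rho$: the new variable is given, on the $B$-side, every value whose $\rho$-image Spoiler allowed on the $A$-side and which is compatible with the forced equalities, and one checks the result is again the maximal matching team. Splitting is handled by projecting Spoiler's split $X_{i-1}=X'\cup X''$ back through $\rho$. The main obstacle is precisely the clash between duplication, which forces the extra element $\ast$ into \emph{every} assignment and so makes it impossible to keep $\ast$ and $a_0$ apart, and splitting, which can create a new forced equality $x\approx x'$ that an assignment $t$ with $\{t(x),t(x')\}=\{a_0,\ast\}$ (an ``orphan'') violates, so that $t$ falls into neither half of the projected split. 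This is where the round budget is spent: such an orphan uses $\ast$, hence (having at most $n$ variables) omits some element $c\in A$, and re-choosing $\rho$ so that its doubled fibre becomes $\{c,\ast\}$ turns the orphan into a legitimate member of a matching team again. The heart of the argument, and the step I expect to be most delicate, is to show that at a splitting move Duplicator can choose the split of $Y_{i-1}$ together with a pair of re-selected surjections $\rho',\rho''$ — one for each half — so that both halves are again maximal matching teams whose union is all of $Y_{i-1}$; the slack $|V|\le n<|B|$ is what guarantees that the required free element is always available. Everything else is a routine induction on the number of remaining rounds.
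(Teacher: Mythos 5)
Your final-position analysis and your bookkeeping observation (at most $n$ variables ever appear, so every assignment misses an element of $B$, and later of $A$) are both sound, and the latter is indeed the resource the paper also spends in the duplication case. The genuine gap is exactly the step you flag as ``most delicate'': it is not merely delicate, it is impossible, because your invariant is too rigid. Concretely, take $n=3$, $A=\{1,2,3\}$, $B=\{1,2,3,4\}$, and let Spoiler duplicate $v_1$ and then $v_2$; the position is now $(X_2,Y_2)$ where $X_2$ is the team of \emph{all} assignments $\{v_1,v_2\}\to A$ and $Y_2$ that of all assignments $\{v_1,v_2\}\to B$, and your invariant holds for every admissible $\rho$ (no equalities are forced). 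In round $3$ Spoiler splits $X_2$ into the diagonal $X'=\{s: s(v_1)=s(v_2)\}$ and the off-diagonal $X''=\{s: s(v_1)\neq s(v_2)\}$. Since $X'$ forces $v_1=v_2$, any non-losing $Y'$ must lie inside the diagonal of $B$; since $X''$ satisfies $v_1\neq v_2$, any non-losing $Y''$ must avoid the diagonal of $B$; and legality of the move demands $Y'\cup Y''=Y_2$. Hence Duplicator's \emph{only} surviving reply is $Y'=$ the diagonal of $B$ and $Y''=$ the \emph{entire} off-diagonal of $B$. But for any surjection $\rho''$ with a single doubled fibre $\{c,\ast\}$, the maximal matching team of $X''$ is $\{t:\rho''(t(v_1))\neq\rho''(t(v_2))\}$, which omits the two off-diagonal pairs $(c,\ast)$ and $(\ast,c)$; these land in neither half, so the move is illegal, and putting them into $Y'$ instead loses to the atom $v_1=v_2$. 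Re-choosing the fibre only relocates the omitted pairs: every choice of $\{c,\ast\}$ creates its own orphans, so no pair of surjections $\rho',\rho''$ can make the two maximal matching teams cover $Y_2$. Your strategy therefore breaks down exactly where you deferred the proof.

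The repair is to abandon the single global surjection $B\to A$ (pull-back) in favour of a per-assignment correspondence via \emph{all} injections $A\to B$ (push-forward), which is what the paper does: Duplicator maintains $Y=\bigcup\{\pi[X]:\pi\in I(A,B)\}$, where $I(A,B)$ is the set of injections $A\to B$ and $\pi[X]=\{\pi\circ s: s\in X\}$. This invariant transfers literals and inclusion atoms because each $\pi\circ s$ has exactly the same equality pattern as $s$ (no separate forced-equality bookkeeping is needed), it is closed under splitting \emph{by construction}, since $\bigcup_{\pi}\pi[X_1\cup X_2]=\bigcup_{\pi}\pi[X_1]\cup\bigcup_{\pi}\pi[X_2]$, and the round budget $i<n$ is spent only in the duplication case, to extend a given $\pi$ (restricted to the range of $s$) by mapping a spare element of $A$ to the new value $b$ --- precisely your slack argument. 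In the counterexample above this invariant produces $Y''=$ the full off-diagonal, as required. So the skeleton of your proof (game, invariant, final-position check, slack for duplication) is the right one, but the invariant itself must be the union-over-injections team, not the maximal preimage team of one near-bijection.
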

\begin{proof}
It suffices to specify a winning strategy for Duplicator in the game $G_n(A, \{\emptyset\}, B, \{\emptyset\})$. Our aim for such a strategy is to preserve the following property for $n$ turns: 
\begin{itemize}
\item If the current position is $(X, Y)$ then 
\begin{equation}\label{wincond}
	Y = \bigcup \{\pi[X] : \pi\in I(A,B)\},
\end{equation}
\end{itemize}
where $I(A,B)$ is the set of all 1-1 functions $A\to B$,
$\pi[X] = \{\pi(s) : s \in X\}$ and $\pi(s)$ denotes the assignment $\pi\circ s$. 

The property (\ref{wincond}) is trivially true for $(\{\emptyset\}, \{\emptyset\})$. Furthermore, as long as (\ref{wincond}) holds, Spoiler does not win. Indeed, if $\alpha$ is a first-order literal such that $A \models_s \alpha$ for all $s \in X$, then, since all $s' \in Y$ are of the form $\pi(s)$ for some $s \in X$ and the signature is empty, we have $B \models_{s'} \alpha$ for all $s' \in Y$. Similarly, suppose that $A \models_X \tuple u \subseteq \tuple w$, and let $s' \in Y$. Then $s' = \pi(s)$ for some $s \in X$ and some $\pi\in I(A,B)$, and there exists a $h \in X$ such that 
$\tuple u\langle s\rangle = \tuple w\langle h\rangle$. But then $\pi(h) \in Y$, and $\tuple w\langle\pi(h)\rangle =\tuple u\langle  \pi(s)\rangle = \tuple u\langle s'\rangle$, as required. 

Thus, we only need to verify that Duplicator can maintain property (\ref{wincond}) for $n$ rounds. Suppose that at round $i < n$ the current position $(X, Y)$ has property (\ref{wincond}), and let us consider the possible moves of Spoiler: 
\begin{description}
\item[Splitting:] Suppose that Spoiler splits $X$ into $X_1$ and $X_2$. Then let Duplicator reply by splitting $Y$ into 
$Y_j = \bigcup \{s' \in Y : \exists \pi\in I(A,B)\exists s \in X_j \mbox{ such that } \pi(s) = s'\}$
for $j\in \{1,2\}$. Then $Y = Y_1 \cup Y_2$, and it is straightforward to check that both possible successors $(X_1, Y_1)$ and $(X_2, Y_2)$ have property (\ref{wincond}).
\item[Supplementing:] Suppose that Spoiler chooses a function $H\in\CFAX$. Then let Duplicator reply with the function $K\in\CFBY$ defined as
\begin{align*}
	K(s') = \bigcup \{&\pi(a) : \exists\pi\in I(A,B)\exists s \in X \mbox{ such that } \pi(s) = 
	s' \mbox{ and }\\ & a \in H(s)\}
\end{align*}
for each $s' \in Y$. We leave it to the reader to verify that the next position $(X[H/v], Y[K/v])$ has property (\ref{wincond}). 
\item[Duplication:] If Spoiler chooses a duplication move, the next position is $(X[M/v], Y[M/v])$. We  check that this new position satisfies property (\ref{wincond}).

Let $s[a/v] \in X[A/v]$ and let $\pi\in I(A,B)$. Since $s \in X$, we have that $\pi(s) \in Y$, and therefore $\pi(s)[b/v] = \pi(s[a/v]) \in Y[B/v]$. 

Conversely, let $s' \in Y$ and let $b$ be any element of $B$. We need to show that $s'[b/v]=\pi(s[a/v])$ for some $\pi\in I(A,B)$, $s \in X$ and $a \in \dom(A)$. 

By induction hypothesis, there exists $\pi\in I(A,B)$ and $s \in X$ such that $\pi(s) = s'$. If $b$ is in the range of $\pi$, then $s'[b/v]=\pi(s[a/v])$, where $a=\pi^{-1}(b)$.
On the other hand, if $b$ is not in the range of $\pi$, then since $i<n$, there is an element $a\in A$
which is not in the range of $s$. Now $s[a/v]\in X[A/v]$, and $s'[b/v]=\pi'(s[a/v])$, where
$\pi'\in I(A,B)$ is a function such that $\pi'(a)=b$ and $\pi'(c)=\pi(c)$ for all $c$ in the range of $s$.

\end{description}
\end{proof}
From Proposition \ref{efex} it immediately follows that \emph{even cardinality} (and other similar cardinality
properties) of finite models is not definable in Inclusion Logic. This, of course, follows already from
the equivalence of $\INCL$ and $\PGFP$, as it is well-known that non-trivial cardinality properties
are not definable in fixed point logics.


\section{Conclusions and Further Work}
In this work, we proved a number of results concerning the expressive power of inclusion Logic. 
We showed that this logic is strictly weaker than $\Sigma_1^1$, and corresponds in fact to Positive Greatest Fixed Point Logic. Furthermore, we showed that all union-closed first-order properties of relations correspond to the satisfaction conditions of Inclusion Logic formulas, and we also defined a new Ehrenfeucht-Fra\"iss\'e game for it. 

Due to the connection between Inclusion Logic and fixed point logics, the study of this formalism may have interesting applications in descriptive complexity theory. In \cite{durkon12}, Durand and Kontinen established some correspondences between fragments of Dependence Logic and fragments of $\NP$; in the same way, one may hope to find correspondences between fragments of Inclusion Logic and fragments of $\PTIME$. 

Furthermore, we may inquire about extensions of Inclusion Logic. For example, is there any natural union-closed dependency notion $\mathbf D$ such that $\mathrm{FO}(\subseteq, \mathbf D)$ defines all $\Sigma_1^1$ union-closed properties of relations? By the results in Section \ref{sub:FOUC}, we know that if this is the case, then $\mathbf D$ is not first-order.

\section{Acknowledgements}
Pietro Galliani was supported by grant 264917 of the Academy of Finland.

We thank Erich Gr\"adel, Miika Hannula, Juha Kontinen and Jouko\\ V\"a\"an\"anen for a number of highly useful suggestions and comments. We especially thank Miika Hannula for pointing out an error in a previous version of the paper. 




%
\end{document}